\def\eps{\varepsilon}
\def\d{{\rm d}}
\def\R {\mathbb{R}}
\def\e {\varepsilon}
\def\z{\zeta}
\def\O{\Omega}
\def\bp{p}
\def\bbQ{\mathbb{Q}}
\def\bbY{\mathbb{Y}}
\def\calM{\mathcal M}
\def\QQ {\mathcal{Q}}
\newcommand{\FF}{{\mathcal F}}
\newcommand{\HH}{{\mathcal H}}
\newtheorem{proposition}{Proposition}[section]
\newtheorem{theorem}[proposition]{Theorem}
\newtheorem{corollary}[proposition]{Corollary}
\newtheorem{lemma}[proposition]{Lemma}
\theoremstyle{definition}
\newtheorem{definition}[proposition]{Definition}
\numberwithin{equation}{section}
\newcommand{\epsi}{\varepsilon}
\newcommand{\Nz}{{\mathbb N}}
\newcommand{\Rz}{{\mathbb R}}
\newcommand{\dx}{\,\text{\rm d}x}
\newcommand{\per}{\mathrm{Per}}
\newcommand{\cof}{\mathrm{cof\,}}
\newcommand{\md}{{\rm d}}
\newenvironment{proofreverse}{\removelastskip\par\medskip   
\noindent{\em Proof of {\rm Theorem \ref{reverse}}.}
\rm}{\penalty-20\null\hfill$\square$\par\medbreak} 
\newenvironment{proofth2}{\removelastskip\par\medskip   
\noindent{\em Proof of {\rm Theorem \ref{th2}}.}
\rm}{\penalty-20\null\hfill$\square$\par\medbreak} 
\def \no#1#2#3 {{\bf #1} (#3), #2.}
\def \eds#1#2#3 {#1, #2, #3.}
\title[A phase-field approach to  Eulerian   interfacial energies]
{A phase-field approach to \\  Eulerian  interfacial energies}
\author{Diego Grandi}
\address[Diego Grandi]{Dipartimento di Matematica e Informatica, 
  Universit\`a  degli Studi di Ferrara,
Via Machiavelli 30, 44121 - Ferrara, Italy.}
\email{diego.grandi@unife.it}
\author{Martin Kru\v{z}\'ik}
\address[Martin Kru\v{z}\'ik]{
Czech Academy of Sciences, Institute of Information Theory and Automation,
Pod vod\' arenskou ve\v z\' \i\ 4, 182 08, Prague 8, Czech Republic and Faculty of Civil Engineering, Czech Technical University, Th\'{a}kurova 7, 166 29, Prague 6, Czech Republic.}
\email{kruzik@utia.cas.cz}
\author{Edoardo Mainini}
\address[Edoardo Mainini]{Dipartimento di Ingegneria meccanica, energetica, gestionale e dei trasporti, 
  Universit\`a  degli studi di Genova, Via all'Opera Pia, 15 - 16145 Genova Italy.}
\email{mainini@dime.unige.it}
\author{Ulisse Stefanelli}
\address[Ulisse Stefanelli]{Faculty of Mathematics, University of Vienna, 
Oskar-Morgenstern-Platz 1, 1090 Wien, Austria  and  Istituto di Matematica
Applicata e Tecnologie Informatiche \textit{{E. Magenes}}, v. Ferrata 1, 27100
Pavia, Italy.}
\email{ulisse.stefanelli@univie.ac.at}
\keywords{}
\begin{document}
\begin{abstract}
 We analyze a phase-field approximation of a sharp-interface model
for two-phase materials proposed by M.~\v{S}ilhav\'{y}
\cite{S2,S3}. The distinguishing trait of the model resides in the
fact that the interfacial term is Eulerian in nature, for it is
defined on
the deformed configuration. We discuss a functional frame allowing for
existence of phase-field minimizers and $\Gamma$-convergence to the
sharp-interface limit. As a by-product, we provide additional detail
on the admissible sharp-interface configurations with respect to the
analysis in
\cite{S2,S3}.

\end{abstract}
\maketitle
%
%
%
\section{Introduction}

This paper addresses the equilibrium of a two-phase elastic medium,
whose stored energy takes the form
\begin{align}
&\FF_0(y,\z)=  \FF^{\rm bulk}(y,\z)+ \FF_0^{\rm int}(y,\z)
\nonumber\\
&:=\int_\Omega\Big( (\z\circ y)\,
   W_1(\nabla y)+(1-\z\circ y)\,W_{ 0 }(\nabla y)\Big)\,\dx +
   \gamma\, \per\big(\{\z=1\},y(\Omega)\big). \label{effezero} 
 \end{align}
Here, $y:\Omega \to \R^3$ stands for the deformation of the medium with
respect to its {\it reference} configuration $\Omega\subset \R^3$ 
and  $W_0$,
$W_1$ are the elastic energy densities of the two pure phases \cite{silhavy-book}. The Eulerian
phase indicator $\z: y(\O) \to \{0,1\}$ is defined on the {\it
  deformed} configuration $y(\O)$ instead. Note that solely pure
phases are allowed. The stored energy of the
medium includes an elastic bulk part $\FF^{\rm bulk}(y,\z)$, consisting of an
integral on the reference configuration, and an interface contribution
$\FF_0^{\rm int}(y,\z)$, featuring the perimeter of the phase
$\{\z=1\}$ in $y(\O)$,  where $\gamma>0$ is a surface-tension coefficient.  With
respect to classical hyperelastic theory, the novelty in
\eqref{effezero} is that the 
interface is measured {\it in the deformed configuration}, giving rise
to a variational model of mixed Lagrangian-Eulerian type.

The choice of the elastic energy $\FF_0$ is inspired by the notion of \emph{interface polyconvex energy}, introduced  by
M.~\v{S}ilhav\'y in the series of contributions \cite{S1,S2,S3}. 
The explicit form in \eqref{effezero} is  in fact just a first
example in the wider class    considerer therein,  where the
general interfacial term  reads 
\begin{align}\label{ex-silhavy0}\int_{\partial E\setminus\partial\O}\Psi(n, \nabla_S y\times
n,(\cof\nabla_S y)n)\,\d S.
\end{align}
 Here, $ \d S $ is the infinitesimal area element (in the
 reference  configuration), and $\Psi:\R^{15}\to\R$ is a positively
 $1$-homogeneous convex function  depending on  the normal  $n$ 
 to the
interface, on the surface gradient $\nabla_S y$
of the deformation, and on the cofactor of the surface gradient. 
More precisely,  $\Psi=\Psi( n,\mathbb{F}\times
n,\cof \mathbb{F}n)$, where $\mathbb{F}\in\R^{3\times 3}$ is a
placeholder of the surface gradient of the deformation  and
$\mathbb{F}\times n:\R^3\to \R^3$ is defined for all $a\in \R^3$ as $(\mathbb{F}\times
n)a:=\mathbb{F}(n\times a)$.  Note that
$\mathbb{F}n=0$, because $n$ inevitably lives  in the kernel of
$\mathbb{F}$. 
A rigorous definition would ask to cope with
 the possible  
nonsmoothness of $y$, the existence of the surface gradient  $\nabla_S
y$,
and also whether $n$ does exist  at the phase interface,  which in
turn  
relates with the regularity of  phase $1$ in the reference
configuration, for $E=y^{-1}(\{\z =1\})$  in \eqref{ex-silhavy0}.  The  specific interfacial term in
\eqref{effezero}   corresponds to  the choice \cite[Ex.~5.7]{S3}
 \begin{align}\label{ex-silhavy}
\tilde \Psi( n,\mathbb{F}\times n,\cof \mathbb{F}n):=\gamma|\cof \mathbb{F}n|.
\end{align}
 Indeed, it  is a standard matter to check that  $(\cof\nabla_S y)n=(\cof\nabla
y)n$. Then, a formal application of the change-of-variables  formula
for  surface integrals \cite{ciarlet} gives
\begin{align}\label{formal-per}
\int_{\partial E \setminus \partial \Omega}\tilde
  \Psi( n,\mathbb{F}\times n,\cof \mathbb{F}n) \, \md S =
  \gamma 
\int_{\partial E \setminus \partial \Omega}|(\cof \nabla y)n|\,\md S= 
\gamma \int_{\partial y(E)\setminus\partial y(\O)}\,\md S^y.
\end{align}
 As  $\md S^y$ is the
infinitesimal area element in the deformed configuration
$y(\O)$,    we have checked that,  along with choice
\eqref{ex-silhavy},   the interfacial
energy term measures indeed the surface of the
interface in the deformed configuration.  This  is consistent with the definition of $\FF^{\rm int}_0$ from \eqref{effezero}. 

Our main results are the existence of minimizers of $\FF_0$ (Theorem
\ref{th1}) and the viability of a phase-field approach (Theorem
\ref{th2}) to such  
 sharp-interface model via the diffuse-interface energies for $\e>0$
\begin{align}
&\FF_\e(y,\z)=  \FF^{\rm bulk}(y,\z)+\FF_\e^{\rm int}(y,\z):=\FF^{\rm bulk}(y,\z) +  \int_{y(\O)}\Big( \frac\eps2| \nabla 
  \z|^2+\frac1\eps \Phi(\z)\Big)\,\d\xi.\label{effee}
 \end{align}
Note that the diffuse-interface term  $\FF_\e^{\rm int}(y,\z)$ is
still Eulerian,  but the phase indicator $\z$ takes now values in the
interval $[0,1]$. Here and throughout the paper, $\xi$ stands for the
variable in the deformed configuration $y(\Omega)$.  The function
$\Phi$  in \eqref{effee}  is a classical double-well potential with
minima at $0$ and $1$,  and $\int_0^1\sqrt{2\Phi(s)}\, \d  s=\gamma$.   By checking the $\Gamma$-convergence of $\FF_\e$ to
$\FF_0$ we essentially deliver a version of the Modica-Mortola Theorem~\cite{MM} {\it in the deformed configuration}. Instrumental to this is
the discussion of the interplay of deformations and perimeters in
deformed configurations, which constitutes the main technical
contribution of our paper  (Theorem \ref{reverse}).

Let us mention that variational formulations
featuring both Lagrangian and Eulerian terms are currently attracting
increasing attention.
A prominent case is that of 
 magnetoelastic materials \cite{desimone.james}, where Lagrangian
 mechanical terms and Eulerian magnetic effects combine \cite{barchiesi.henao.mora-corral,bielsky.gambin,kruzik.stefanelli.zeman,rybka.luskin}.
Mixed Lagrangian-Eulerian formulations arise in the modeling of
nematic polymers  \cite{barchiesi.desimone,barchiesi.henao.mora-corral}, where the Eulerian variable is the nematic director
orientation, and
in piezoelectrics \cite{rosato.miehe}, involving the Eulerian
polarization instead. An interplay of Lagrangian and Eulerian
effects occurs already in case of space dependent forcings, like in
the variable-gravity case \cite{fitzpatrick}, as well as in specific 
finite-plasticity settings \cite{stefanelli}, where elastic and plastic deformations are
composed. Most notably,  such mixed formulations arise naturally in the study of fluid-structure interaction, where the deformed body defines the (complement of the) fluid domain \cite{richter}.

 The plan of the paper is as follows. We  present in detail our  assumptions on
the ingredients of the models in Section
\ref{mainsection}. In particular, we specify the class of admissible
deformations and state a  characterization of
sets of finite perimeter with respect to deformed configurations
(Theorem \ref{reverse}). Subsection \ref{main} contains the statements
of our main existence and approximation results. These are put in
relation with the former theory by M.~\v{S}ilhav\'y in Subsection
\ref{sil}.
We check in Section \ref{injectivity} that admissible deformations are
actually homeomorphisms, so that, in particular, the deformed
configuration is well defined. The proof of the Characterization
Theorem \ref{reverse} is presented in Section \ref{sil2}, along with a
suite of results on perimeters in deformed configurations. The
existence of minimizers to $\FF_0$ (Theorem \ref{th1}) is proved in
Section \ref{conve}. Eventually, Section \ref{lastsection} proves the $\Gamma$-convergence of the phase-field
diffuse-interface energies $\FF_\e$ to the sharp-interface limit
$\FF_0$ (Theorem \ref{th2}).  

\section{Main results} \label{mainsection}

 We devote this section to specifying the functional frame
(Subsections \ref{finite}-\ref{assu2}) and
stating  our  main results (Subsection \ref{main}). The relation of our results
with the former existence theory by M.~\v{S}ilhav\'y is also discussed
(Subsection \ref{sil}).

We first introduce some basic notation. We denote
by $B(a,\eps):=\{z \in \R^n \ | \ |z- a|<\eps\}$ the open ball of
radius $\eps>0$ centered at $a\in \mathbb{R}^n$.
If $ \O\subset\R^n$ is an open set, $C^m(\O;\R^k)$ denotes the space
of continuous maps on $\O$ with values in $\R^k$ that admit continuous
derivatives up to the order $m\ge 0$.  $C^m_{\rm  c}(\O;\R^k)$ is the subspace of compactly supported  maps.
    For $p\in[1,+\infty)$, $W^{1,p}(\O;\R^k)$  denotes the standard Sobolev space, and $W^{1,p}_{\rm loc}(\O;\R^k)$ denotes its local counterpart.   
The
space of finite vector Radon measures on $\O$ with values in $\R^k$ is
denoted by $\mathcal{M}(\O,\R^k)$ and it is normed by the total
variation $|\cdot|(\O)$. $\mathcal{M}_{\rm  loc}(\O;\R^k)$ denotes the space of locally finite vector Radon measures. Furthermore, $BV(\O;\R^k)$ stands for the space of maps with bounded variation. See
e.g.~\cite{AFP} for references. 
With  slight abuse of  notation, we occasionally replace $\R^k$ in the target space by a set.
 For a measurable set $E \subset\Omega$, we
denote the $n$-dimensional Lebesgue measure by $|E|$ and the
$m$-dimensional Hausdorff measure by $\HH^m(E)$.  By   $\chi_E$ we
denote the characteristic function of $E$.  The perimeter of   $E$
in $\O$ is classically defined as 
\cite[Def. 3.35]{AFP}  
$$\per(E,\O):=\sup\left\{\int_{E}\textrm{div}\varphi\, \d x\ | \ \varphi\in C^\infty_{\rm c}(\O;\R^n),\;\|\varphi\|_\infty\leq 
1\right\}.$$

%
%

 Given  $y: E \to \Rz^3$, we will use the notation $E^y := y(E)$.

\subsection{ Finite distorsion and finite perimeter}\label{finite}

 Let us start by defining the function classes that we are going to
be dealing with. 

\begin{definition}[Finite distorsion]\label{finitedistorsion} 
 Let $\O\subset\mathbb{R}^n$  for $n\ge 2$  be an open set. 
A Sobolev map $y\in W^{1,1}_{\rm loc}(\Omega;\R^{ n})$ with $\det\nabla y\ge 0$ almost everywhere in $\Omega$ is said to be of {\it finite distorsion} if $\det\nabla y\in L^1_{\rm loc}(\Omega)$ and  there is  a function $K:\O\to[1,+\infty]$ with $K<+\infty$ almost everywhere in $\O$ such that $|\nabla y|^n\le K\det\nabla y$. 
For a mapping $y$ of finite distorsion, the {\it (optimal) distorsion function} $K_y:\Omega\to\mathbb{R}$ is defined as
\[
K_y:=\left\{\begin{array}{ll}{|\nabla y|^{ n}}/{\det\nabla y}\quad&\mbox{if $\det\nabla y\neq0$}\\
1\quad&\mbox{if $\det\nabla y=0$}.\end{array}\right.
\]
\end{definition}

   The
relation of our theory to the former one by  M.~\v{S}ilhav\'y is
encoded in the following characterization result for sets of finite
perimeters in the actual configuration. Although it will be later
applied just for $n=3$, we state the characterization here for general
dimension, for we believe that it could be of  independent interest. 
 
 \begin{theorem}[Characterization of sets of  finite perimeter]\label{reverse} Let $\Omega\subset\mathbb{R}^n$ be an  open set, $n\ge 2$.
Suppose that $E\subset\Omega$ is a measurable set and that $y\in
W^{1,n}_{\rm loc}(\Omega;\R^n)$ is a homeomorphism of finite distorsion.
Then $\per(E^y,\Omega^y)<\infty$ if and only if  there exists a finite Radon measure $\bp_{y,E}\in\mathcal{M}(\Omega;\R^n)$ such that there holds
\begin{equation}\label{eq:p_yE-1}
 \int_{ E}\cof(\nabla 
y):\nabla\psi \, \d x=\int_{\O}\psi\cdot \d \bp_{y,E}\qquad \forall \psi\in C^\infty_{\rm c}(\O;\R^n).
\end{equation}
 In this case, $\per(E^y,\Omega^y) = |\bp_{y,E}|(\O)$. 
\end{theorem}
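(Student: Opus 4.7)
The plan is to identify $\bp_{y,E}$ with the pull-back $-(y^{-1})_\# D\chi_{E^y}$, and to translate identity \eqref{eq:p_yE-1} into the classical Gauss-Green identity in the deformed configuration via a pair of change-of-variables formulas. The foundation is that, since $y\in W^{1,n}_{\rm loc}$ is a homeomorphism of finite distorsion, it enjoys the Lusin N condition and the area formula $\int_{E^y} f\,\d\xi=\int_E (f\circ y)\det\nabla y\,\d x$, and moreover $y^{-1}\in W^{1,1}_{\rm loc}(\Omega^y;\R^n)$ by classical results on inverses of such maps. Combining the Sobolev chain rule with the Piola-type algebraic identity $\cof(\nabla y):[A\nabla y]=(\det\nabla y)\,\mathrm{tr}(A)$, valid for any $n\times n$ matrix $A$, produces two companion identities: for $\psi\in C^\infty_{\rm c}(\Omega;\R^n)$,
\begin{equation}\label{sketch:cov1}
\int_E\cof(\nabla y):\nabla\psi\,\d x=\int_{E^y}\textrm{div}(\psi\circ y^{-1})\,\d\xi,
\end{equation}
and symmetrically $\int_E\cof(\nabla y):\nabla(\varphi\circ y)\,\d x=\int_{E^y}\textrm{div}\,\varphi\,\d\xi$ for $\varphi\in C^\infty_{\rm c}(\Omega^y;\R^n)$; in each case the relevant composition has enough Sobolev regularity with compact support to be tested.

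\emph{Forward direction.} If $\per(E^y,\Omega^y)<\infty$, set $\bp_{y,E}:=-(y^{-1})_\# D\chi_{E^y}$, a finite Radon measure on $\Omega$ with $|\bp_{y,E}|(\Omega)=\per(E^y,\Omega^y)$ because $y$ is a homeomorphism. For $\psi\in C^\infty_{\rm c}(\Omega;\R^n)$, the function $\psi\circ y^{-1}$ belongs to $C_{\rm c}(\Omega^y;\R^n)\cap W^{1,1}_{\rm c}(\Omega^y;\R^n)$, so the Gauss-Green formula for sets of finite perimeter applies and gives $\int_\Omega\psi\cdot\d\bp_{y,E}=\int_{E^y}\textrm{div}(\psi\circ y^{-1})\,\d\xi$, which combined with \eqref{sketch:cov1} yields \eqref{eq:p_yE-1}.

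\emph{Converse direction.} If a measure $\bp_{y,E}$ satisfying \eqref{eq:p_yE-1} is given, a standard mollification argument extends the identity to all $\psi\in C_{\rm c}(\Omega;\R^n)\cap W^{1,n}_{\rm c}(\Omega;\R^n)$: the LHS is continuous in $\nabla\psi$ in $L^n_{\rm loc}$ (since $\cof\nabla y\in L^{n/(n-1)}_{\rm loc}$) while the RHS is continuous in $\psi$ under uniform convergence. Taking $\psi=\varphi\circ y$ for arbitrary $\varphi\in C^\infty_{\rm c}(\Omega^y;\R^n)$, which belongs to $C_{\rm c}\cap W^{1,n}_{\rm c}(\Omega;\R^n)$ because $\nabla y\in L^n_{\rm loc}$, and applying the second companion formula, we obtain $\int_{E^y}\textrm{div}\,\varphi\,\d\xi=\int_{\Omega^y}\varphi\cdot\d(y_\#\bp_{y,E})$. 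The supremum over $\|\varphi\|_\infty\le 1$ yields $\per(E^y,\Omega^y)\le|\bp_{y,E}|(\Omega)<\infty$, and equality follows by combining with the forward bound.

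\emph{Main obstacle.} The delicate technical step is establishing the two change-of-variables formulas at the low regularity available, combining the Sobolev chain rule for composition with both $y$ and $y^{-1}$, the Lusin N condition, and a form of the Gauss-Green identity for sets of finite perimeter valid for test vector fields that are only Sobolev with compact support. In particular, the fact that the inverse of a $W^{1,n}$-homeomorphism of finite distorsion is itself weakly differentiable — a classical but non-elementary result — is essential for the forward direction.
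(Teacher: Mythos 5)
Your proposal is correct and follows essentially the same route as the paper: the same cofactor identity $\cof\nabla y:\nabla(\varphi\circ y)=(\det\nabla y)(\mathrm{div}\,\varphi)\circ y$, the area formula via Lusin's condition $N$, the Cs\"ornyei--Hencl--Mal\'y regularity of $y^{-1}$, mollification of $\psi\circ y^{-1}$, and the extension of \eqref{eq:p_yE-1} to $W^{1,n}\cap C^0_{\rm c}$ test functions. Packaging the measure as the pull-back $-(y^{-1})_\# D\chi_{E^y}$ is a tidy reformulation, but the underlying computations coincide with the paper's Proposition \ref{direct} and the proof of Theorem \ref{reverse}.
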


 A proof of this  characterization is  provided in  Section
\ref{sil2}. 

In the following, we call $\bp_{y,E}$ a {\it \v{S}ilhav\'y
  measure} if it is a finite Radon measure and 
it fulfills  \eqref{eq:p_yE-1}
for some $y$ and $E$ within the assumption frame of Theorem \ref{reverse}. This
naming is hinting to the relevance that such measures enjoy within the
theory  by M.~ \v{S}ilhav\'y  \cite{S1,S2}, see Subsection \ref{sil} below. Theorem
\ref{reverse} proves in particular that, given an admissible
deformation $y$,  \v{S}ilhav\'y measures correspond one-to-one  to sets of
finite perimeter in the deformed configuration $\O^y$.

Notice in particular that, by taking $y$  to be  the identity
map on $\O$,   Theorem \ref{reverse} reduces  to the 
classical  characterization of sets $E$ of finite perimeter in
$\O$  \cite[Thm. 3.36]{AFP}, namely those sets such that 
there exists a finite measure $\bp_E$  with 
\[
 \int_{ E}\mathrm{div}\psi\, \d x=\int_{\O}\psi\cdot \d \bp_{E}\qquad \forall \psi\in C^\infty_{\rm c}(\O;\R^n).
\]

\subsection{ Admissible states}\label{assu}

From now on let the open, bounded, and Lipschitz domain $\O\subset \R^3$ indicate the
reference configuration. The body undergoes a deformation
 $y:\O\to\R^3$, which is assumed to be a Sobolev mapping of finite
 distorsion. We will  in fact ask  that $y$ is orientation-preserving, i.e., $\det\nabla y>0$ almost everywhere in $\O$.
It is well known that positivity of $\det\nabla y$ ensures only 
the  local injectivity of $y$ \cite{ciarlet}. However, it is shown
by Ciarlet and Ne\v{c}as  \cite{ciarlet-necas} that  if $y\in
W^{1,p}(\O;\R^3)$ for some $p>3$, $\det\nabla y>0$ almost everywhere,
and additionally the so-called {\it Ciarlet-Ne\v{c}as condition}
\begin{align}\label{inj}
\int_\O\det\nabla y(x)\,\md x\le | \O^y |\ 
\end{align}
 holds, then almost every point in $  \O^y $ has only one
 preimage. Under such assumptions, as we will  thoroughly discuss
  in Section \ref{injectivity},  everywhere injectivity (so that
 the deformation is a homeomorphism) can be further enforced by
 requiring that  the distorsion function  $K_y$  is in $L^q(\Omega)$
 for some $q>2$.

 Therefore, we define the set of {\it admissible deformations} as  
\begin{equation} \label{eq:Y}
\begin{aligned}
\bbY:=\left\{y \in W^{1,p}(\O;\mathbb R^3)\ | \ \det\nabla y>0 \text{ a.e.}\, , 
 \int_\O\det\nabla y(x)\,\md x\le |y(\O)|,\, 
  K_y\in L^q(\Omega)
 \right\}
\end{aligned}
\end{equation}
where $p>3$ and  $q>2$ are fixed.  We shall check in Section
\ref{injectivity} that admissible deformations are homeomorphisms, see
Theorem \ref{Thm:MIE}. In particular, the  deformed configuration
$\O^y$    is an open
set. 
 
 We consider a material with two different phases (e.g.,~two
 martensitic variants of a shape memory alloy)  which we indicate
 with the subscripts $0$ and $1$.  To indicate 
 the portion $E\subset \Omega$ of the reference configuration where
 one finds phase $1$, one defines $z:\Omega \to \{0,1\}$ and $\zeta:
 \Omega^y \to \{0,1\}$ to be the
 characteristic functions of $E$ and $ E^y $, respectively. In
 particular, we have that $z=\zeta \circ y$.

The set  of {\it admissible states} $(y,\zeta)$  is defined as
$$
\mathbb{Q}:= 
\{(y,\zeta)\ | \ y\in \mathbb{Y}, \ \zeta\in BV(\Omega^y;\{0,1\})\}.
$$ 
Similarly, we define the set of admissible  states  for the phase-field
approximation  as 
\[
\overline{\mathbb{Q}}:= 
\{(y,\zeta) \ | \ y\in \mathbb{Y}, \ {\zeta}\in BV(\Omega^y;[0,1])\}.
\]

\subsection{Assumptions on the bulk energy} \label{assu2}
We assume that   $W_0$ and $W_1$  are {\it polyconvex} \cite{Ball-1977}, i.e., for $F \in \R^{3 \times 3}$
\begin{align} \label{W-ass1} %
  W_i(F ):=
  \begin{cases}
    h_i(F ,\cof F ,\det F ) & \mbox{ if } \det F >0, \\
    \infty \mbox{ otherwise}
  \end{cases}
\end{align}
for some convex functions $h_i:\R^{19} \to \R$, $i= 0, 1$. In
addition, we assume $W_i$ to be coercive, frame-indifferent, and
unbounded as $\det F \to 0+$. More precisely,  for $i=0,1$, we
assume that there exist  $C>0$  such that 
\begin{align}
  & W_i(F )\ge C\left(|F |^p+\frac{|F |^{3q}}{(\det F )^{q}}  - 1\right)
  &&\forall F  \in \R^{3 \times 3}, \, p>3, \ q>2  , \label{W-ass2} \\
  & W_i(RF)= W_i(F ) && \forall R\in{\rm SO}(3), F  \in \R^{3\times 3} \ , \label{W-ass3}\\
  & W_i(F ) \to \infty \ \ \text{as}  \ \  \det F \to 0_+  \label{W-ass4}
\end{align}
 where ${\rm SO}(3)$ is the special orthogonal group ${\rm
  SO}(3)=\{R\in  \R^{3\times 3} \ | \ R R^T =  I, \ \det
R=1\}$. 
The third term on the right-hand side of \eqref{W-ass2} ensures that
deformation  gradients  $F=\nabla y$ with finite energy  will
have a $q$-integrable distorsion function  $F\mapsto
|F|^3/\det F$. Notice that $F\mapsto|F|^3/\det F$ is polyconvex on the
set of matrices with positive determinant.

Eventually, we specify boundary conditions by imposing admissible deformations to match a given
deformation $y_0$ at the boundary $\partial \Omega$. To this aim, we
assume that
\begin{equation}
\exists (y_0,\z_0)\in\bbQ \ \ \text{with} \ \ \mathcal{F}_{0}(y_0,\z_0)<\infty\label{W-ass5}
\end{equation}
and define
$$\mathbb{Q}_{y_0}:=\{(y,\zeta)\in\mathbb{Q}\ | \
y=y_0\;\text{on}\; \partial\O\}.$$
Analogously, we consider 
$$\overline{\mathbb{Q}}_{y_0}:=\{(y,\zeta)\in \overline{\mathbb{Q}}\ | \
y=y_0\;\text{on}\; \partial\O\}.$$


%

%
\subsection{Main results}\label{main}
 We are now in the position of stating the main results of the
paper, which concern existence for the sharp-interface minimization
problem and convergence of the phase-field approximation.  

\begin{theorem}[Existence of minimizers]\label{th1}
 Under  assumptions \eqref{W-ass1}-\eqref{W-ass5}  
the functional $\mathcal{F}_0$ admits a minimizer on
$\mathbb{Q}_{y_0}$. 
\end{theorem}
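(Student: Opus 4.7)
The plan is to apply the direct method of the calculus of variations. Let $(y_k, \zeta_k) \subset \bbQ_{y_0}$ be a minimizing sequence for $\FF_0$. The coercivity in \eqref{W-ass2}, combined with the Dirichlet boundary condition, yields uniform bounds on $\|y_k\|_{W^{1,p}(\Omega)}$, on $\|K_{y_k}\|_{L^q(\Omega)}$, and on $\per(\{\zeta_k=1\}, \Omega^{y_k})$. Up to a subsequence, $y_k \weak y$ in $W^{1,p}$, and since $p>3$ the Sobolev embedding gives uniform convergence $y_k \to y$ on $\overline{\Omega}$. The classical weak continuity of minors yields $\cof \nabla y_k \weak \cof \nabla y$ in $L^{p/2}$ and $\det \nabla y_k \weak \det \nabla y$ in $L^{p/3}$, while the polyconvexity of $F \mapsto |F|^{3q}/(\det F)^q$ on $\{\det F>0\}$ transfers the $L^q$ distorsion bound to the limit. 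Uniform convergence preserves the boundary condition $y = y_0$ on $\partial\Omega$, and Theorem~\ref{Thm:MIE} then ensures that $y$ is a homeomorphism of $\overline{\Omega}$; thus $y \in \bbY$ with the prescribed Dirichlet data.

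Since every $y_k$ is a homeomorphism of $\overline{\Omega}$ with the common boundary trace $y_0|_{\partial\Omega}$, a Jordan-Brouwer/degree argument shows that all deformed configurations coincide: $\Omega^{y_k} = \Omega^{y} = \Omega^{y_0}$ as subsets of $\R^3$. Consequently, $\{\zeta_k\}$ is a uniformly BV-bounded sequence on the fixed bounded domain $\Omega^{y_0}$, and BV compactness delivers, up to a further subsequence, $\zeta_k \to \zeta$ in $L^1(\Omega^{y_0})$ with $\zeta \in BV(\Omega^{y_0}; \{0,1\})$ and the lower semicontinuity
\[
\per(\{\zeta=1\}, \Omega^{y_0}) \,\leq\, \liminf_{k\to\infty} \per(\{\zeta_k=1\}, \Omega^{y_0}).
\]

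The key remaining step is the transfer of this strong $L^1$ convergence from the deformed to the reference configuration, i.e., showing $z_k := \zeta_k \circ y_k \to \zeta \circ y =: z$ strongly in $L^1(\Omega)$. Via the triangle inequality
\[
\|z_k - z\|_{L^1(\Omega)} \,\leq\, \|(\zeta_k - \zeta) \circ y_k\|_{L^1(\Omega)} + \|\zeta \circ y_k - \zeta \circ y\|_{L^1(\Omega)},
\]
the second summand is controlled by the measure of the preimage under $y$ of a vanishing neighborhood of the reduced boundary of $\{\zeta=1\}$, and tends to zero because $y$ satisfies Luzin's condition $N^{-1}$ (a consequence of $K_y \in L^q$ with $q > 2$, via the Hencl-Koskela regularity theory for inverses of Sobolev homeomorphisms of finite distorsion). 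The first summand equals $|y_k^{-1}(\{\zeta_k=1\} \triangle \{\zeta=1\})|$ by the area formula, and vanishes thanks to uniform equi-absolute-continuity of the family $\{y_k^{-1}\}$, again based on the $L^q$ distorsion bound. With this strong $L^1$ convergence of $z_k$ in hand, Ioffe's lower semicontinuity theorem, applied to the Carathéodory integrand $(x, s, F, H, d) \mapsto s\, h_1(F, H, d) + (1-s)\, h_0(F, H, d)$, which is convex in $(F, H, d)$ for each fixed $s \in [0,1]$ and linear in $s$, yields together with the weak convergence of the minors
\[
\FF^{\rm bulk}(y, \zeta) \,\leq\, \liminf_{k \to \infty} \FF^{\rm bulk}(y_k, \zeta_k).
\]
Adding this to the perimeter semicontinuity shows that $(y, \zeta)$ realizes the infimum.

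The main obstacle is precisely this passage from BV compactness in the (common) deformed domain back to strong $L^1$ convergence of the pulled back phase field on the reference domain. This transfer relies on the delicate combined use of the Ciarlet-Ne\v{c}as injectivity condition, the $W^{1,p}$ bound with $p>3$, and the $L^q$ distorsion bound with $q>2$, which together ensure that limit deformations are homeomorphisms and that inverses $y_k^{-1}$ enjoy uniform higher integrability properties strong enough to pull back small-set cancellations in the deformed configuration back to the reference one.
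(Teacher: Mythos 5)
The decisive step of your argument --- that a Jordan--Brouwer/degree argument forces $\Omega^{y_k}=\Omega^{y}=\Omega^{y_0}$ \emph{as subsets of} $\R^3$, so that the phase fields $\zeta_k$ all live on one fixed domain --- is a genuine gap, and it is exactly the difficulty the paper is built to circumvent. Admissible deformations are only guaranteed to be injective on the open set $\Omega$ (Theorem \ref{Thm:MIE}); boundary self-contact is permitted by the Ciarlet--Ne\v{c}as condition, so $y_k(\partial\Omega)=y_0(\partial\Omega)$ may meet the image of the interior. What degree theory yields for continuous $W^{1,p}$, $p>3$, maps with equal boundary traces is only that $\Omega^{y_k}$ and $\Omega^{y_0}$ agree \emph{up to a subset of the closed Lebesgue-null set} $y_0(\partial\Omega)$; two open sets differing by such a set need not coincide (a ball versus a ball with a slit removed), and neither $BV$ compactness nor the relative perimeter $\per(\cdot,U)$ is stable under changing the ambient open set $U$ by a closed null set. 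Consequently your uniform $BV$ bound ``on the fixed domain $\Omega^{y_0}$'' and the ensuing inequality $\per(\{\zeta=1\},\Omega^{y_0})\le\liminf_{k}\per(\{\zeta_k=1\},\Omega^{y_0})$ are not justified; in any case the energy involves $\per(\{\zeta=1\},\Omega^{y})$, not the perimeter relative to $\Omega^{y_0}$.

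The paper instead treats the deformed configurations as genuinely varying: Lemma \ref{lem:0} gives $|\Omega^{y_k}\Delta\Omega^{y}|\to 0$ together with $A\subset\Omega^{y_k}\subset O$ for $A\subset\subset\Omega^y\subset\subset O$ and $k$ large; compactness of the phases is obtained locally on an exhaustion $O^\ell\subset\subset\Omega^y$ followed by a diagonal extraction; and the lower semicontinuity of the Eulerian perimeter with moving ambient domains is Proposition \ref{lem:2}, whose proof pulls the perimeter back to the reference configuration through the characterization of Theorem \ref{reverse} (namely $\per(F_k,\Omega^{y_k})=|\bp_{y_k,E_k}|(\Omega)$ with $\bp_{y_k,E_k}=-\mathrm{div}(\chi_{E_k}\cof\nabla y_k)$) and then exploits the weak $L^{p/2}$ convergence of the cofactors together with the strong $L^1(\Omega)$ convergence of $\chi_{E_k}$. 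This pull-back is the core technical content behind Theorem \ref{th1} and cannot be replaced by the fixed-domain shortcut. The remaining ingredients of your proposal --- the transfer $z_k\to z$ in $L^1(\Omega)$ via equiintegrability of $\det\nabla y_k^{-1}$ and the bulk lower semicontinuity --- do match the paper's Lemmas \ref{lem:equi} and \ref{lem:1} and the argument of Proposition \ref{Prop:diffuse-ex}.
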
 

A proof of this statement is in Section \ref{conve}.

 Our second main result  delivers a  Modica-Mortola-type 
approximation  via the  functionals $\mathcal{F}_\eps$ from
\eqref{effee}, corresponding  indeed   to  diffuse-interface models. 
Under the  additional  assumption that the current
configuration  $\O^y$  is a Lipschitz domain (which is  not 
necessarily  true for general    $y\in W^{1,p}(\O;\R^3)$) we have the following 

\begin{theorem}[Phase-field approximation]\label{th2}
 Under assumptions  
 \eqref{W-ass1}-\eqref{W-ass5}, 
 for any $\eps>0$ the  functional $\mathcal{F}_\eps$ admits  a minimizer
  on   $\overline\bbQ_{y_0}$. 
   If $\Omega^{y_0}$ is a Lipschitz domain and $\eps_k \to 0 $, then,  for every sequence
 $(y_k,\zeta_k)$ of minimizers of $\mathcal{F}_{\eps_k}$  on 
 $\overline\bbQ_{y_0}$, there exists $(y, \z)\in \bbQ_{y_0}$ such
 that, up to  not relabeled subsequences, 
 \begin{itemize}
 \item[i)] $y_k\to y$ weakly in $W^{1,p}(\O;\R^3)$, 
  $|\Omega^{y_k}\Delta\O^y|\to 0$, and  $\|\zeta_k-{\zeta}\|_{L^1(O^k)}\to0 $ as $k\to\infty$, where $ O^k:=\O^{y_k}\cap \O^y$.
 \item[ii)] $(y,\z)$ minimizes $\mathcal{F}_0$ on $\bbQ_{y_0}$.
 \end{itemize}
 \end{theorem}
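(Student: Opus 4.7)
The proof splits into existence for each fixed $\eps>0$ and the limiting behaviour as $\eps_k\to 0$; both rely on the same compactness framework, which I set up first. Along any sequence $(y_k,\zeta_k)\in\overline{\mathbb Q}_{y_0}$ with uniformly bounded energy, the coercivity \eqref{W-ass2} yields bounds on $\|y_k\|_{W^{1,p}}$ and $\|K_{y_k}\|_{L^q}$, so up to a subsequence $y_k\rightharpoonup y$ in $W^{1,p}(\Omega;\mathbb R^3)$ and uniformly (as $p>3$). Polyconvexity and the weak continuity of the minors, the passage to the limit of the Ciarlet--Ne\v{c}as inequality \eqref{inj}, and Theorem \ref{Thm:MIE} together place $y$ in $\mathbb Y_{y_0}$ and show that $y$ is a homeomorphism. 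A standard argument based on the uniform convergence $y_k\to y$ and on the fact that both $y_k$ and $y$ are homeomorphisms yields $|\Omega^{y_k}\triangle \Omega^y|\to 0$ and shows that every $U\Subset\Omega^y$ is eventually contained in every $\Omega^{y_k}$.

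For fixed $\eps>0$, a competitor in $\overline{\mathbb Q}_{y_0}$ with finite $\mathcal F_\eps$-energy is obtained by mollifying $\zeta_0$ inside $\Omega^{y_0}$; the direct method then applies to a minimizing sequence via the framework above. The $H^1(\Omega^{y_k})$-bound from the Ginzburg--Landau term allows the extraction, via Rellich on each fixed $U\Subset\Omega^y$, of $\zeta_k\to \zeta$ in $L^2_{\rm loc}(\Omega^y)$. The $L^2$-lower semicontinuity of the Ginzburg--Landau integrand on each such $U$, together with standard polyconvex lower semicontinuity of $\mathcal F^{\rm bulk}$, then concludes.

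For the $\Gamma$-convergence statement, the Lipschitz assumption on $\Omega^{y_0}$ enables the classical Modica--Mortola recovery of $\zeta_0$ on this fixed domain---via the one-dimensional optimal profile composed with the signed distance to $\partial\{\zeta_0=1\}$---producing $\zeta_0^{(k)}$ with $\mathcal F_{\eps_k}(y_0,\zeta_0^{(k)})\to \mathcal F_0(y_0,\zeta_0)<\infty$. Hence $\mathcal F_{\eps_k}(y_k,\zeta_k)\le C$, and the setup gives $y_k\rightharpoonup y\in \mathbb Y_{y_0}$ with $|\Omega^{y_k}\triangle \Omega^y|\to 0$. The interfacial liminf is the key step: exhausting $\Omega^y$ by open sets $U\Subset\Omega^y$, each eventually contained in $\Omega^{y_k}$, the classical Modica--Mortola compactness and liminf on the fixed $U$ produce $\zeta_U\in BV(U;\{0,1\})$ with
\begin{equation*}
\gamma\,\mathrm{Per}(\{\zeta_U=1\},U)\le \liminf_k\int_U \Big(\tfrac{\eps_k}{2}|\nabla\zeta_k|^2+\tfrac{1}{\eps_k}\Phi(\zeta_k)\Big)\md\xi \le \liminf_k \mathcal F^{\rm int}_{\eps_k}(y_k,\zeta_k).
\end{equation*}
A diagonal argument over the exhaustion produces a single $\zeta\in BV(\Omega^y;\{0,1\})$ realising the liminf on all of $\Omega^y$; the $L^1(O^k)$-convergence in (i) then follows by combining the local $L^1$-convergence with $|\Omega^{y_k}\triangle \Omega^y|\to 0$. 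For the bulk liminf, the weak convergence of the minors combined with the strong $L^1(\Omega)$-convergence $\zeta_k\circ y_k \to \zeta\circ y$---obtained via the uniform convergence of $y_k$, a change of variables exploiting the homeomorphism property, and equi-integrability of $\det\nabla y_k$---together with polyconvexity of $W_i$, yields the desired bound.

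To identify $(y,\zeta)$ as a minimizer of $\mathcal F_0$, for any competitor $(\tilde y,\tilde\zeta)\in\mathbb Q_{y_0}$ with $\mathcal F_0(\tilde y,\tilde\zeta)<\infty$ I would build a recovery sequence by taking $\tilde y_k\equiv \tilde y$ (so the boundary condition is automatic) and $\tilde \zeta_k$ the standard Modica--Mortola profile across the reduced boundary of $\{\tilde\zeta=1\}$ in $\Omega^{\tilde y}$, giving $\mathcal F_{\eps_k}(\tilde y,\tilde\zeta_k)\to \mathcal F_0(\tilde y,\tilde\zeta)$. Combined with the liminf on $(y_k,\zeta_k)$, this yields $\mathcal F_0(y,\zeta)\le \mathcal F_0(\tilde y,\tilde\zeta)$ and hence minimality. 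The principal obstacle is the interfacial liminf on $k$-dependent domains $\Omega^{y_k}$; it is resolved precisely by the exhaustion of $\Omega^y$ by subsets eventually contained in all $\Omega^{y_k}$, which reduces the problem to a family of classical Modica--Mortola liminfs on fixed domains.
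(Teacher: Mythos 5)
Your proposal is correct in substance, but it takes a genuinely different route from the paper at the central step, the interfacial $\liminf$ on the moving domains $\O^{y_k}$. You localize: you exhaust $\O^y$ by open sets $U\Subset\O^y$ that are eventually contained in every $\O^{y_k}$ (this inclusion is exactly the paper's Lemma \ref{lem:0}(i)), run the classical Modica--Mortola compactness and $\liminf$ on each fixed $U$, and conclude by a diagonal argument and inner regularity of the perimeter measure. The paper instead works globally: it applies the coarea formula to reduce to perimeters of superlevel sets $\{\z_k>s\}$ in $\O^{y_k}$ and then invokes Proposition \ref{lem:2}, a lower-semicontinuity result for perimeters in deformed configurations which rests on the characterization Theorem \ref{reverse} (the perimeter of $E^y$ in $\O^y$ equals the total variation of the pulled-back divergence measure $-\mathrm{div}(\chi_E\cof\nabla y)$), on the weak $L^{p/2}$ convergence of $\cof\nabla y_k$, and on the strong $L^1(\O)$ convergence of the pulled-back phases (Lemma \ref{lem:1}). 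Your Eulerian exhaustion argument is more elementary and sidesteps the \v{S}ilhav\'y-measure machinery entirely for this theorem; what the paper's route buys is a reusable semicontinuity statement for arbitrary sequences of finite-perimeter sets $F_k\subset\O^{y_k}$, which it also needs in the proof of Theorem \ref{th1} and which ties the phase-field result to the sharp-interface characterization. Two details deserve correction or elaboration. First, for the strong $L^1(\O)$ convergence $\z_k\circ y_k\to\z\circ y$ needed in the bulk $\liminf$, the change of variables requires equiintegrability of the \emph{inverse} Jacobians $\det\nabla y_k^{-1}$ on $\O^{y_k}$, not of $\det\nabla y_k$; this is precisely where the uniform $L^q$ bound on the distorsion with $q>2$ and the Onninen--Tengvall estimate (Lemma \ref{lem:equi}) enter, and it is not a consequence of the $W^{1,p}$ bound alone. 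Second, your recovery sequence for a general competitor $(\tilde y,\tilde\z)\in\bbQ_{y_0}$ builds the optimal profile in $\O^{\tilde y}$, so you must observe that $\O^{\tilde y}=\O^{y_0}$ (both deformations are homeomorphisms continuous up to $\partial\O$ and agree there), since only $\O^{y_0}$ is assumed Lipschitz and the Modica--Mortola limsup construction needs this regularity of the ambient domain.
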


%
\subsection{Relation with \v Silhav\'y's theory}\label{sil}
%
%
 Before moving on, let us comment on our results in light of the
theory by M.~\v{S}ilhav\'{y}  \cite{S1,S3}. To this end, we need to
clarify the definition of the general interfacial-energy term in
\eqref{ex-silhavy0},
 which requires  introducing some  measure theoretic setting. We recall that the \emph{reduced boundary} of a finite perimeter set $E$ in $\Omega$ is defined as the set of points $x$ of $\Omega$ such that
$x\in{\rm supp}\, |\nabla\chi_E|$ and  such that the limit
$
n_E(x):=\lim_{\eps\to 0}\frac{-\nabla\chi_E(B(x,\eps))}{|\nabla\chi_E|(B(x,\eps))}$
exists and satisfies $|n_E(x)|=1$ (see \cite[Def. 3.54]{AFP}). We say that $n_E$ is  the \emph{outer measure-theoretic unit normal} to $E$.
We let 
$$\QQ:=\{(y,z)\ | \ y\in W^{1,p}(\Omega),\, \det\nabla y>0\, \text{ a.e. in } \Omega,\;   z\in BV(\O;\{0,1\})\}.$$ 
  For any pair $(y,z)\in \QQ$,
  let $E:=\{z=1\}$,  let $S$ denote the reduced boundary of the finite perimeter set $E$ in $\O$, and let ${n}_E$ denote the corresponding outer measure-theoretic  unit normal.
 Following \cite[Def. 3.1]{S2}, we denote  by $\QQ_0 \subset
\QQ$ the set of all pairs $(y,z)\in \QQ$ for which there exists a
finite Radon measure $m_{y,E}:=({a}_{y,E},{h}_{y,E},{p}_{y,E})\in \calM(\O;\R^{15})$ such that ${a}_{y,E} := {n}_E \HH^2_{|S}$ and such that there hold \eqref{eq:p_yE-1} and
  \begin{align} \label{form-1} %
  \int_{ E} \nabla y\ (\nabla \times \psi) \,\md x= \int_\O \psi \, \d{h}_{y,E} \qquad \forall \psi\in C^\infty_{\rm c}(\O;\R^3).
\end{align} %
	Consider a  positively $1$-homogeneous   convex  function $\Psi:\R^{15}\to\R$  such that 
	\begin{align}\label{g-ass4}
\Psi(A)\ge C|A| \quad\mbox{for some $C>0$ and all $A\in\R^{15}$.}
\end{align}
    If $|{m}_{y,E}|$ denotes the total variation of  ${m}_{y,E}$, the interfacial energy is then defined as
  \begin{align} \label{stat-interface} \FF^{\rm int}_{\text{\v{S}ilhav\'y}}(y,z):=\begin{cases}
      \displaystyle \int_\O \Psi\left(\frac{\md {m}_{y,E}}{\md |{m}_{y,E}|}\right)\ \md|{m}_{y,E}|\ %
      &\text{for $(y,z) \in  \QQ_0$,} \\
      +\infty & \text{otherwise.}
			\end{cases}
  \end{align}

  On the other hand, the bulk energy  in the reference configuration  
  is defined as
$$\widetilde \FF^{\rm bulk}(y,z) := \int_\Omega\Big( z\,
   W_1(\nabla y)+(1-z)\,W_{ 0 }(\nabla y)\Big)\,\dx$$
where 
$W_i$  are  assumed to satisfy \eqref{W-ass1}, \eqref{W-ass3}-\eqref{W-ass4}, and $W_i(F)\ge C|F|^p$ for $i=0,2$ and some $p>3$.
Under such assumptions on $W_i$ and \eqref{g-ass4},  \v{S}ilhav\'{y}
proves that $\widetilde\FF^{\rm bulk}(y,z)+\FF^{\rm int}_{\text{\v{S}ilhav\'y}}(y,z) $ admits a minimizer  on
$\{(y,z)\in\mathcal{Q}_0\ | \ y=y_0 \text{ on $\partial\O$}\}$,  see
 \cite[Thm.~3.3]{S2} and \cite[Thm.~1.2]{S3}. 
 Our Characterization  Theorem \ref{reverse} shows in
particular that, under the further assumption of $y$ being a
homeomorphism, the perimeter of the image set $ E^y=\{z=1\}^y $
is finite in $ \O^y $.  More specifically,  Theorem
\ref{reverse} provides a characterization of those deformations that
admit a  \v{S}ilhav\'y measure ${p}_{y,E}\in
\mathcal{M}(\Omega;\mathbb{R}^3)$. 

 The existence result of  Theorem \ref{th1} refers to the specific
case \eqref{ex-silhavy} within the larger class
\eqref{ex-silhavy0}. As such, the global coercivity assumption
\eqref{g-ass4} is not required.

%
\section{ Admissible deformations are homeomorphisms}\label{injectivity}
%
 The aim of this section is to check that the continuous
representative of the class of the admissible deformation $y\in
\mathbb{Y}$ \eqref{eq:Y}
is injective, hence a homeomorphism between $\O$ and $\O^y $, see
Theorem \ref{Thm:MIE} below. We break down the argument into Lemmas,
which we believe to be of  an independent interest. Let us   
start with   a definition. 

 \begin{definition}[almost-everywhere injectivity]\label{def:injectivity}
We say that $y:\O\to\R^3$ is {\it almost-everywhere injective } 
  if there  exists  $\omega\subset\O$ such that $|\omega|=0$ and 
$y(x_1)\ne y(x_2)$  for every  $x_1,x_2\in\O\setminus\omega$
satisfying $x_1\ne x_2$. 
\end{definition}

 Given  $y:\O\to\R^3$, $\xi\in\R^3$, and a subset $\omega\subset\O$,
we define the   {\it Banach indicatrix} $N(\xi,y,\omega)$  by 
\begin{align}\label{indicatrix}
N(\xi,y,\omega):=\#\{x\in\omega\ | \  y(x)=\xi\}\ ,
\end{align}
where the right-hand-side denotes the cardinality (i.e., the number of
elements) of the set. The map
$y:\O\to\R^3$ is said to satisfy {\it Lusin's condition $N$} if  it
maps negligible sets to negligible sets, namely $| \omega^y |=0$ for
all  $\omega\subset\O$ such that $|\omega|=0$.  Moreover, it satisfies  {\it Lusin's condition
  $N^{-1}$} if the preimage of any negligible set is negligible,
namely $|y^{-1}(\omega)|=0$ for all  $\omega\subset  \Omega^y $ such
that $|\omega|=0$.

 Any continuous  map  $y\in W^{1,p}(\Omega;\mathbb{R}^3)$,  $p>3$,
satisfies the Lusin's condition $N$  \cite[Theorem 4.2]{HK}. This implies the validity of the area formula with equality \cite[Theorem A.35]{HK}.  If
in addition $\det\nabla y>0$ almost everywhere in $\O$, $y$ satisfies
Lusin's condition $N^{-1}$ as well \cite[Thm.~8.3,
Lem.~8.3-8.4]{bojarski-iwaniec}. This in particular implies that the
continuous representative of $y\in \mathbb{Y}$ fulfils both Lusin's
$N$ and $N^{-1}$ condition.

 Let us present a first result on almost-everywhere
injectivity, see \cite[Prop. 3.2]{Giacomini-Ponsiglione} for a similar argument. 

\begin{lemma}[Ciarlet-Ne\v{c}as implies almost-everywhere
injectivity]\label{lemma:cn}  Let   $y\in W^{1,p}(\O;\R^3)$ be continuous,  $p>3$, and
$\det\nabla y>0$ almost everywhere in $\O$.   If the
Ciarlet-Ne\v{c}as condition \eqref{inj}  holds,  then $y$ is
almost-everywhere injective in the sense of  Definition ~\emph{\ref{def:injectivity}}.
\end{lemma}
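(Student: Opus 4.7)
The plan is to combine the area formula with the Ciarlet--Ne\v{c}as bound to show that almost every point in $y(\Omega)$ has a single preimage, then transfer this information back to $\Omega$ via Lusin's condition $N^{-1}$.

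First, I would recall that since $p>3$ and $y$ is continuous, the map $y$ satisfies Lusin's condition $N$, so the area formula holds with equality:
\[
\int_\O \det\nabla y\,\d x \;=\; \int_\O |\det\nabla y|\,\d x \;=\; \int_{\R^3} N(\xi,y,\O)\,\d \xi \;=\; \int_{\O^y} N(\xi,y,\O)\,\d \xi,
\]
where $\det\nabla y=|\det\nabla y|$ is used since $\det\nabla y>0$ a.e., and the last equality follows because $N(\cdot,y,\O)$ is supported in $\O^y$. Combining this with the Ciarlet--Ne\v{c}as assumption \eqref{inj} gives
\[
\int_{\O^y}\bigl(N(\xi,y,\O)-1\bigr)\,\d\xi \;\le\; 0.
\]
Since $N(\xi,y,\O)\ge 1$ for every $\xi\in \O^y$, the integrand is nonnegative, hence it must vanish almost everywhere, i.e., the set
\[
A:=\{\xi\in \O^y \ | \ N(\xi,y,\O)\ge 2\}
\]
satisfies $|A|=0$.

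Next, I would invoke the fact recalled in the excerpt that a continuous $W^{1,p}$-map ($p>3$) with $\det\nabla y>0$ almost everywhere additionally satisfies Lusin's condition $N^{-1}$. Defining $\omega:=y^{-1}(A)\subset \O$, the $N^{-1}$ property yields $|\omega|=0$. Finally, if $x_1,x_2\in\O\setminus\omega$ with $x_1\ne x_2$ satisfied $y(x_1)=y(x_2)=:\xi$, then $\xi\in \O^y$ with $N(\xi,y,\O)\ge 2$, so $\xi\in A$ and therefore $x_1,x_2\in y^{-1}(A)=\omega$, a contradiction. Hence $y$ is almost-everywhere injective in the sense of Definition \ref{def:injectivity}.

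The whole argument is quite short; there is no real obstacle beyond ensuring the availability of the area formula with equality and of Lusin's condition $N^{-1}$, both of which are already justified in the surrounding discussion. The only subtle point worth double-checking is that $N(\xi,y,\O)\ge 1$ precisely on $\O^y$, which is immediate from the definition \eqref{indicatrix} of the Banach indicatrix.
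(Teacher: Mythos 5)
Your proof is correct and follows essentially the same route as the paper's: the area formula (via Lusin's condition $N$) combined with the Ciarlet--Ne\v{c}as bound forces $N(\xi,y,\O)=1$ for a.e.\ $\xi\in\O^y$, and Lusin's condition $N^{-1}$ then transfers the negligibility of the multi-preimage set back to $\O$. No gaps.
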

\begin{proof}
 The map $y$ satisfies  Lusin's  condition $N$. Hence, the area formula
 holds with equality.  The Ciarlet-Ne\v{c}as condition \eqref{inj}
 implies that   
 $$
 |\O^y |\le\int_{y(\O)} N(\xi,y,\O)\,\d \xi=\int_\O\det\nabla y\, \d x\le |\O^y |,
 $$   which entails  $N(\xi,y,\O)=1$ for almost every $\xi\in
\O^y $.  The set  $\omega := \{ \xi \in y(\Omega) \  |  \ N( \xi,y,\Omega)
 >1\} $ is hence negligible.    Since by \cite[Thm.~8.3,
Lem.~8.3-8.4]{bojarski-iwaniec}
 $y$ satisfies Lusin's condition $N^{-1}$,  we get  that
 $|\{x\in\O \ | \  y(x)\in\omega\}|=0$  as well, which corresponds
 to the statement. 
\end{proof}
 Maps that are almost-everywhere injective  still include rather
 nonphysical situations,  for   a dense, countable set of
 points could be mapped to  a single   point.  We shall
 hence present a result in the direction of everywhere injectivity. 

\begin{lemma}[a.e. injectivity and openness imply injectivity]\label{lemma:ine}
 Let  $y:\O \to \R^3$ be continuous,
almost-everywhere injective,  open  (maps open sets to open sets),
and   fulfill Lusin's condition N. Then,  $y$ is
everywhere injective in $\O$.  
\end{lemma}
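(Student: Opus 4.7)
The plan is to argue by contradiction, leveraging openness to produce a set of positive measure where the hypothesized non-injectivity can be detected, and then to use Lusin's condition $N$ to exclude the exceptional null set of Definition \ref{def:injectivity}.

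Concretely, suppose $y$ is not everywhere injective, so that there exist $x_1\neq x_2$ in $\O$ with $y(x_1)=y(x_2)$. Since $\O$ is open, I can select radii $r_1,r_2>0$ small enough that the balls $B_1:=B(x_1,r_1)$ and $B_2:=B(x_2,r_2)$ are disjoint and contained in $\O$. Openness of the map $y$ guarantees that $y(B_1)$ and $y(B_2)$ are open subsets of $\R^3$, and both contain the common point $y(x_1)=y(x_2)$. Hence the intersection $U:=y(B_1)\cap y(B_2)$ is a nonempty open set, which in particular has strictly positive Lebesgue measure.

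Next I would invoke almost-everywhere injectivity to fix a null set $\omega\subset \O$ outside which $y$ is injective. Lusin's condition $N$ then gives $|y(\omega)|=0$, so that the set $U\setminus y(\omega)$ has the same positive measure as $U$ and is, in particular, nonempty. Picking any $\xi\in U\setminus y(\omega)$, the definition of $U$ yields points $x_1'\in B_1$ and $x_2'\in B_2$ with $y(x_1')=y(x_2')=\xi$. Since $\xi\notin y(\omega)$, neither $x_1'$ nor $x_2'$ can belong to $\omega$. On the other hand, $x_1'\neq x_2'$ because $B_1\cap B_2=\emptyset$, which contradicts the injectivity of $y$ on $\O\setminus \omega$ and concludes the argument.

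There is no real obstacle here: the argument boils down to the observation that the obstruction to injectivity, if it existed, would propagate from a single pair of points to a full open set via openness, and the resulting bulk of coincidences cannot be hidden inside the exceptional null set $\omega$ because Lusin's $N$ condition keeps $y(\omega)$ negligible. The only point to handle with some care is making sure the balls $B_1, B_2$ are chosen small enough to be disjoint and still contained in $\O$, which is immediate from the assumption $x_1\neq x_2$ and the openness of $\O$.
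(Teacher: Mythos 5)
Your proof is correct and follows essentially the same route as the paper: argue by contradiction, use openness to turn a single coincidence $y(x_1)=y(x_2)$ into a nonempty open (hence positive-measure) set of doubly covered points, and use Lusin's condition $N$ to show this cannot be absorbed by the exceptional null set of Definition \ref{def:injectivity}. If anything, your version is slightly more explicit than the paper's at the final step, since you apply condition $N$ to $\omega$ itself and exhibit two concrete points $x_1'\neq x_2'$ outside $\omega$ with the same image, whereas the paper applies condition $N$ to the preimage of $U^y\cap V^y$ and leaves the contradiction with almost-everywhere injectivity more implicit.
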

\begin{proof}
Assume  by contradiction that $y$ is not everywhere injective,
 i.e. that there  exist $x_1,\, x_2 \in\O$ with $x_1 \not =
x_2$  such that $y(x_1)=y(x_2)=:a$.  The openness of $y$
implies that   $\O^y $ is open.  We can hence find  $\varepsilon>0$ such that $B(a,\varepsilon)\subset \O^y $. Continuity 
 implies that $y^{-1}(B(a,\varepsilon))\subset\O$ is open.  As 
  $x_1,\, x_2\in y^{-1}(B(a,\varepsilon))$ one can find  two open
  disjoint neighborhoods $U,V$ such that $x_1\in U$, $x_2\in V$ and
  $ U^y \cap  V^y \ni a$.
As $ U^y $ and $ V^y $ are both open their intersection is also open and therefore $| U^y \cap  V^y |>0$, i.e. $N(\xi,y,\O)>1$ for every $\xi\in  U^y \cap  V^y $. On the other hand, the pre-image of 
$ U^y \cap  V^y $ must have a positive 
 measure because $y$ satisfies Lusin's condition $N$. This contradicts almost-everywhere injectivity and concludes the proof.
\end{proof}

 Let us now recall a sufficient condition for the openness of a
map.   

\begin{lemma} [$\mbox{\cite[Thm.~3.4]{HK}}$]\label{lemma:ODID}
  Let $y\in W^{1,p}(\O;\R^3)$ for some $p>3$.  Assume that   $K_y\in L^{q}(\O)$ for some $q>2$.  Then $y$ is either constant or  open.
\end{lemma}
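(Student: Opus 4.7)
This is the classical openness theorem for mappings of finite distortion with the sharp integrability threshold $q > n-1 = 2$ (here $n=3$), due in various forms to Iwaniec--\v{S}ver\'ak, Manfredi--Villamor, Heinonen--Koskela, and Hencl--Koskela. I would follow the standard route: establish that $y$ is \emph{weakly monotone} and \emph{sense-preserving}, and then invoke the topological fact that any continuous, non-constant, weakly monotone, sense-preserving map between domains in $\R^n$ is open (a Chernavskii--Titus--Young-type statement). Since $p>3$, Morrey's embedding gives a continuous representative of $y$ that satisfies Lusin's condition $N$, and the hypotheses $\det \nabla y \ge 0$ a.e.\ together with $K_y < \infty$ a.e.\ qualify $y$ as a mapping of finite distortion in the sense of Definition \ref{finitedistorsion}.

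Sense-preservation I would read off from the area formula: since $\det \nabla y \ge 0$ a.e., the degree formula $\int_U \det \nabla y \, \d x = \int_{\R^3} \mathrm{deg}(y,U,\xi)\,\d \xi$ gives $\mathrm{deg}(y,U,\cdot) \ge 0$ for every sub-domain $U \Subset \Omega$. If $y$ is non-constant, the integrability $K_y \in L^q$ promotes $\det \nabla y$ to higher local integrability (via Gehring-type reverse H\"older inequalities) and prevents it from vanishing identically, so the degree is strictly positive on a set of positive measure inside $y(U)$. Weak monotonicity---that each component $y^i$ attains its oscillation on $\partial B$ for every ball $B \Subset \Omega$---would be established through a capacity/modulus-of-curve-families argument: the pointwise distortion bound $|\nabla y|^3 \le K_y \det \nabla y$, combined with $K_y \in L^q$ and $q > 2$, is used to compare the $3$-modulus of families of radial curves in $B$ against the modulus of their image families. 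The threshold $q > n-1$ is precisely what makes the H\"older splitting in this comparison work; Ball's classical singular examples show that weak monotonicity, hence openness, may fail for $q \le n-1$.

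With both properties in hand, the topological lemma above gives openness of $y$ whenever $y$ is non-constant, which is the stated dichotomy. The main obstacle is precisely the modulus/capacity estimate underpinning weak monotonicity: it is the technical heart of the theory of mappings of finite distortion and is the reason one is forced to assume $q > n-1$ rather than the weaker $q\ge 1$ that would suffice for the mere distortion inequality.
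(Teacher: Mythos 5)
The paper does not prove this lemma at all: it is quoted verbatim from Hencl--Koskela \cite[Thm.~3.4]{HK}, so there is no in-paper argument to compare against, and what you have written is in effect a reconstruction of the proof in the cited monograph. Your overall architecture is the right one for that proof -- sense-preservation read off from the degree formula and $\det\nabla y\ge 0$, a modulus/capacity estimate as the technical heart, the threshold $q>n-1$ entering exactly there, and a Titus--Young/Chernavskii-type topological theorem to conclude.

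There is, however, a genuine gap in the topological step. The statement you invoke -- that a continuous, non-constant, \emph{weakly monotone}, sense-preserving map is open -- is false. Weak monotonicity (each coordinate attains its oscillation on spheres) is the property used in this theory to obtain continuity and a.e.\ differentiability of the representative; it does not prevent a ball from being crushed onto a lower-dimensional set. For instance, $f(x)=(\max(x_1,0),x_2,x_3)$ is continuous, weakly monotone, and has $\det\nabla f\ge 0$ with nonnegative local degree, yet it maps balls in the half-space $\{x_1<0\}$ onto two-dimensional disks. The correct hypothesis in the Titus--Young/Chernavskii theorem is that the map be \emph{light}, i.e.\ that $y^{-1}(\xi_0)$ be totally disconnected for every $\xi_0$; a continuous, sense-preserving, light map is open and discrete. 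Accordingly, the capacity/modulus estimate you correctly identify as the hard analytic step is not aimed at weak monotonicity but at lightness: assuming $y^{-1}(\xi_0)$ contains a nondegenerate continuum, one compares the $n$-modulus of the family of curves joining that continuum to the boundary of a ball with the modulus of the image family (which degenerates because the image of the continuum is a point), and the distortion inequality $|\nabla y|^n\le K_y\det\nabla y$ with $K_y\in L^q$, $q>n-1$, produces the contradiction. With ``weakly monotone'' replaced by ``light'' and the estimate redirected to that property, your outline matches the proof in \cite{HK}.
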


 We are finally in the position of stating the main result of this
section. 

\begin{theorem}[Admissible deformations are homeomorphisms]\label{Thm:MIE}
  The continuous representative of $y \in\mathbb{Y}$ is everywhere
 injective on $\O$. 
\end{theorem}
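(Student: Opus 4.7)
The proof will be essentially a combination of the three preceding lemmas (\ref{lemma:cn}, \ref{lemma:ine}, \ref{lemma:ODID}), together with the Sobolev embedding and the standard Lusin-type properties recalled just before Lemma \ref{lemma:cn}. The plan is to verify, one by one, the hypotheses needed to invoke Lemma \ref{lemma:ine}.

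First, I would fix $y\in \mathbb{Y}$ and pass to its continuous representative; this is legitimate because $p>3$ and $\Omega$ is a bounded Lipschitz domain, so the Sobolev embedding $W^{1,p}(\Omega;\R^3)\hookrightarrow C(\overline\Omega;\R^3)$ applies. In particular $y$ is continuous on $\Omega$, which in turn guarantees that $y$ satisfies Lusin's condition $N$ (as recalled in the paragraph preceding Lemma \ref{lemma:cn}, via \cite[Theorem 4.2]{HK}).

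Next I would establish almost-everywhere injectivity. The admissibility class $\mathbb{Y}$ built in by hand that $\det\nabla y>0$ a.e.\ in $\Omega$ and that the Ciarlet--Ne\v{c}as condition \eqref{inj} holds. These are exactly the hypotheses of Lemma \ref{lemma:cn}, so $y$ is almost-everywhere injective in the sense of Definition \ref{def:injectivity}.

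Then I would establish openness. Since $K_y\in L^q(\Omega)$ for some $q>2$ by the very definition of $\mathbb{Y}$, and since $y\in W^{1,p}(\Omega;\R^3)$ with $p>3$, Lemma \ref{lemma:ODID} yields the dichotomy that $y$ is either constant or open. The constant alternative is ruled out because $\det\nabla y>0$ a.e., so $y$ is open.

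Finally, having at hand a continuous, almost-everywhere injective, open map on $\Omega$ that satisfies Lusin's condition $N$, Lemma \ref{lemma:ine} applies and gives everywhere injectivity. No step requires a genuine new argument; the only mild care is in verifying that the constant alternative in Lemma \ref{lemma:ODID} is excluded, but that is immediate from $\det\nabla y>0$. The combination then automatically yields that $y:\Omega\to\Omega^y$ is a continuous open bijection, hence a homeomorphism, consistent with the statement advertised in Subsection \ref{assu}.
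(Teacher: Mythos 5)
Your proposal is correct and follows essentially the same route as the paper: combine Lemma \ref{lemma:cn} (a.e.\ injectivity from the Ciarlet--Ne\v{c}as condition), Lemma \ref{lemma:ODID} (openness from the $L^q$ bound on $K_y$), and Lemma \ref{lemma:ine} to conclude everywhere injectivity. The only cosmetic difference is that you exclude the constant alternative via $\det\nabla y>0$ a.e., whereas the paper excludes it via almost-everywhere injectivity; both are valid.
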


\begin{proof}
Let $y\in\mathbb{Y}$ be the continuous representative of the equivalence class.
Lemma~\ref{lemma:ODID} implies that $y$ is either constant or  open. However, it cannot be constant because it is almost everywhere injective by Lemma \ref{lemma:cn}. Hence, it is open. 
 By Lemma~\ref{lemma:ine}, $y$ is  everywhere  injective on
 $\O$. By the Invariance of Domain Theorem $y$ is a homeomorphism 
 between $\O$ and $\O^y $. 
\end{proof}

\section{ \v Silhav\'y measure and perimeter: Proof of Theorem \ref{reverse}}\label{sil2}
%
 Within this section, $\O$ is assumed to be  an open subset of
$\R^n$, $n\ge 2$.  In particular, we are not restricting here to
$n=3$.  We are interested in properties of Sobolev homeomorphisms
$y$
in relation to sets of finite perimeter.
 In case $y$ is bi-Lipschitz, sets of finite perimeter are mapped onto
 sets of finite perimeter, see \cite[Theorem 3.16]{AFP}  whereas
 the same property does not hold for  $y$  in $W^{1,p}$ with
 $p<\infty$.  The aim of this
 section is that of proving  Theorem \ref{reverse}, which characterizes pairs
  $(y,E)$ ($y$ is a Sobolev map and $E\subset\Omega$ is a
  measurable set) such that $E^y$  is of finite perimeter in
  $\Omega^y$. We start by preparing some preliminary result.




\begin{proposition}[Perimeter = total variation of the \v{S}ilhav\'y measure]\label{direct}
 Assume  that $E\subset\Omega$ is measurable,   $y\in
W^{1,n}_{\rm loc}(\Omega;\R^n)$ is a homeomorphism, 
and there exists a vector Radon measure $\bp_{y,E}\in\mathcal{M}_{\rm loc}(\Omega;\R^n)$  such that \eqref{eq:p_yE-1} holds.
Then, $\per(E^y,\Omega^y)=|\bp_{y,E}|(\Omega)$.  In particular, if we
assume  that  $\bp_{y,E}$  is  finite, we get that the
perimeter of $E^y$ in $\Omega^y$ is finite  as well. 
\end{proposition}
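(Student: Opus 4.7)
The plan is to identify the pushforward measure $y_\#\bp_{y,E}$ on $\O^y$ with the distributional gradient $-D\chi_{E^y}$. Since $y$ is a bijective homeomorphism, $y_\#$ is an isometry on vector Radon measures, so this identification reduces the claim to
$$\per(E^y,\O^y)=|D\chi_{E^y}|(\O^y)=|y_\#\bp_{y,E}|(\O^y)=|\bp_{y,E}|(\O).$$
The core of the argument is therefore the distributional identity
\begin{equation}\label{plan-key}
\int_\O(\phi\circ y)\cdot\d\bp_{y,E}=\int_{E^y}\mathrm{div}\,\phi\,\d\xi\qquad\forall \phi\in C^\infty_{\rm c}(\O^y;\R^n).
\end{equation}

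To establish \eqref{plan-key}, first I would combine the pointwise algebraic cofactor identity $\cof(\nabla y)(\nabla y)^\TT=\det(\nabla y)\,\Id$ with the Sobolev chain rule $\nabla(\phi\circ y)=(\nabla\phi)(y)\nabla y$ a.e.\ (valid since $\phi\in C^\infty_{\rm c}(\O^y;\R^n)$ and $y\in W^{1,n}_{\rm loc}$) to obtain
$$\cof(\nabla y):\nabla(\phi\circ y)=\det(\nabla y)\,(\mathrm{div}\,\phi)\circ y\quad\text{a.e. in }\O.$$
Integrating over $E$ and applying the area formula for the homeomorphism $y$ would then yield
$$\int_E\cof(\nabla y):\nabla(\phi\circ y)\,\dx=\int_{E^y}\mathrm{div}\,\phi\,\d\xi.$$

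It then remains to test \eqref{eq:p_yE-1} against the non-smooth function $\psi:=\phi\circ y$, which lies in $W^{1,n}(\O;\R^n)$ with compact support $y^{-1}(\mathrm{supp}\,\phi)\Subset\O$. For this I would mollify $\psi$ to produce $\psi_k\in C^\infty_{\rm c}(\O;\R^n)$ with $\psi_k\to\psi$ both in $W^{1,n}$ and uniformly, with supports contained in a fixed compact subset of $\O$. The left-hand side of \eqref{eq:p_yE-1} is continuous in $\nabla\psi_k\in L^n$ because $\cof(\nabla y)\in L^{n/(n-1)}_{\rm loc}$ by H\"older, while the right-hand side is continuous in $\psi_k$ in the supremum norm on the common compact set since $\bp_{y,E}$ is a Radon measure; passing to the limit produces \eqref{plan-key}, whence $y_\#\bp_{y,E}=-D\chi_{E^y}$ as Radon measures on $\O^y$ (in particular $D\chi_{E^y}$ is a Radon measure and $E^y$ is of finite perimeter as soon as $\bp_{y,E}$ is finite), and the equality of total variations follows from the injectivity of $y$. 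The main obstacle I anticipate is the application of the area formula at the critical Sobolev exponent $p=n$: this rests on the Lusin $N$ property of the $W^{1,n}$ homeomorphism $y$, which is a delicate input from geometric function theory classically available in the orientation-preserving case $\det\nabla y\ge 0$ a.e.; by contrast, the mollification/density step and the cofactor identity are routine once the H\"older bookkeeping is in place.
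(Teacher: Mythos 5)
Your proposal is correct and follows essentially the same route as the paper: the same cofactor/chain-rule identity, the same use of the area formula (via the Lusin $N$ property of $W^{1,n}_{\rm loc}$ homeomorphisms, which the paper takes from Reshetnyak), and the same mollification argument to extend \eqref{eq:p_yE-1} to test functions of the form $\varphi\circ y\in W^{1,n}\cap C^0_{\rm c}$. Your final packaging via the pushforward isometry $|y_\#\bp_{y,E}|(\O^y)=|\bp_{y,E}|(\O)$ is just a cleaner restatement of the paper's two-sided comparison of suprema, which it carries out by noting that $f\mapsto f\circ y^{-1}$ maps $C^0_{\rm c}(\O;\R^n)$ onto $C^0_{\rm c}(\O^y;\R^n)$ preserving sup norms.
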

\begin{proof}

A homeomorphism  in $W^{1,n}_{\rm loc}(\Omega;\R^n)$
 satisfies the Lusin's condition $N$ \cite[Thm. 3]{reshetnyak} 
 and is  almost-everywhere  differentiable \cite[Cor. 2.2.5]{HK}.
Thanks to the Lusin's condition $N$, the area formula holds with equality and gives
\[\begin{aligned}
\per(E^y,\O^y)&=\sup\left\{\int_{E^y}\textrm{div}\varphi(\xi)\, \d \xi \;|\;\varphi\in C^\infty_{\rm c}(\O^y;\R^n),\;\|\varphi\|_\infty\leq 
1\right\}\\&
=\sup\left\{\int_{E}\textrm{div}\varphi(y(x))\,\textrm{det}\nabla y(x)\,\d 
x\;|\;\varphi\in C^\infty_{\rm c}(\O^y;\R^n),\;\|\varphi\|_\infty\leq 1\right\}.
\end{aligned}
\]
 Note that the 
identity \begin{equation}\label{cofidentity}(\textrm{div}\varphi)\circ
  y\det\nabla y=\cof\nabla y:\nabla(\varphi\circ y)\end{equation}
holds almost everywhere in $\Omega$. 
Indeed, we  may  write $\mathrm{div} \varphi=\nabla\varphi:I$
(where $I$ is the identity matrix), and  relation  \eqref{cofidentity} 
follows from  the chain-rule  formula $\nabla(\varphi\circ
y)=(\nabla\varphi\:\circ y) \nabla y$,  which is valid almost
everywhere in  $\Omega$, and  from  the matrix identity $(\cof {A}) {A}^T=I\det {A}$.
Therefore, we get
\begin{equation}\label{eq:chain}\begin{aligned}
\per(E^y,\Omega^y)
=\sup\left\{\int_{E}\cof(\nabla y):\nabla(\varphi\circ y)\,\d x\;|\;\varphi\in 
C^\infty_{\rm c}(\O^y;\R^3),\;\|\varphi\|_\infty\leq 1\right\}.
\end{aligned}
\end{equation}
 As  $y\in W^{1,n}_{\rm loc}(\Omega;\R^n)$,  we have $\cof\nabla y\in
L^{r}_{\rm loc}(\Omega)$  with  $r=n/(n-1)$. 
Formula \eqref{eq:p_yE-1} can be extended by continuity to 
all test functions  in the class $W^{1,n}(\O;\R^n)\cap C_{\rm c}^0(\O;\R^n)$
since $\bp_{y,E} $ is a measure and the conjugated exponent of $r$ is $n$.
 Fix now $ \varphi\in 
C^\infty_{\rm c}(\O^y;\R^3)$ and notice that  there holds $\varphi\circ y \in 
C^0_{\rm c}(\O;\R^n)$,  as  $y$ is a homeomorphism   and hence
  $y^{-1}(\mathrm{supp}(\varphi))$ is compact  in   $\Omega$.  
 Moreover, since $y\in W^{1,n}_{\rm loc}(\Omega;\R^n)$,   we have
 that   $\varphi\circ y \in W^{1,n}(\O;\R^n)$. 
 Therefore, $\varphi\circ y$ is an admissible test function for 
 equality  \eqref{eq:p_yE-1}.

 From  \eqref{eq:chain} and the extension of \eqref{eq:p_yE-1} to $W^{1,n}(\O;\R^n)\cap C_{\rm c}^0(\O;\R^n)$ we obtain
\begin{equation}\label{plus}
	\per(E^y,\O^y)=\sup\left\{\int_{\O}(\varphi\circ y)\cdot \d 
\bp_{y,E}\;|\;\varphi\in C^\infty_{\rm c}(\O^y;\R^n),\;\|\varphi\|_\infty\leq 1\right\}.
\end{equation}
On the other hand,  the total variation of ${p}_{y,E}$ is, by definition,
\begin{equation}\label{eq:tot-var}\begin{aligned}
 |\bp_{y,E}|(\O)&=\sup\left\{\int_{\O}f\cdot \d 
\bp_{y,E}(x)\;|\;f\in C^0_{\rm c}(\O;\R^n),\;\|f\|_\infty\leq 1\right\}.
\end{aligned}\end{equation}
From \eqref{plus} and \eqref{eq:tot-var}  it immediately follows that
\begin{equation}\label{eq:per-ineq}
\per(E^y,\O^y)	\leq |\bp_{y,E}|(\O).
\end{equation}

 In order to establish the   reverse inequality, one has to
prove that any $f\in C^\infty_{\rm c}(\O;\R^n)$ can be  uniformly  
approximated  by functions of the form $\varphi\circ y $, with $ \varphi\in 
C^\infty_{\rm c}(\O^y;\R^n)$.  Fix  $f\in C^0_{\rm c}(\O;\R^n)$ and $K:=\mathrm{supp}(f)$. Then 
$K^y$ is compact in $\O^y$. On $K^y$, define the function
$
g:=f\circ y^{-1}, 
$ which can be extended to $g\in C^0_{\rm c}(\O^y;\R^n)$ by setting $g=0$
outside $K^y$.  For all $\e>0$ choose now $\varphi_\e\in
C^\infty_{\rm c}(\O^y;\R^n)$ with $\sup_{\Omega^y}|g-\varphi_\e|<\e$. Then,
one has that  
$
\sup_\Omega|f-\varphi_\e\circ y|<\e,
$
which provides the desired approximation. 
\end{proof}

 The proof of Theorem \ref{th2} follows from checking the converse
statement of Proposition \ref{direct}. In order to achieve this,  a crucial role is played by the following result on Sobolev homeomorphisms of finite distorsion due to Cs\"ornyei, Hencl, and Mal\'y \cite{CHM}, see also \cite{HK0, HKM}. 
\begin{proposition}[$\mbox{\cite[Theorem 1.2]{CHM}}$]
\label{simple}
Let $y\in W^{1,n-1}_{\rm loc}(\Omega; \R^n )$ be  a
homeomorphism of finite distorsion. Then $y^{-1}\in W^{1,1}_{\rm
  loc}(\Omega^y;\R^{ n})$ and is of finite distorsion.
\end{proposition}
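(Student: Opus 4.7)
The plan is to identify the candidate weak derivative of $y^{-1}$ by formally inverting the cofactor relation, bound this candidate in $L^{1}_{\mathrm{loc}}(\Omega^y)$ via the area formula and the $W^{1,n-1}$ hypothesis, and then upgrade pointwise almost-everywhere agreement to genuine weak differentiability by checking absolute continuity on lines. Throughout, the finite-distortion hypothesis plays the role of preventing pathological degeneracies at the zero set of the Jacobian.

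First I would collect the standard facts on the direct map $y$. A homeomorphism of finite distortion in $W^{1,n-1}_{\mathrm{loc}}$ is differentiable almost everywhere (a Stepanov-type argument using $|\nabla y|^{n-1}\in L^1_{\mathrm{loc}}$ to control the cofactor and hence the local modulus of continuity) and satisfies Lusin's condition $(N)$ (Mal\'y--Martio theory, as $K_y$ is finite a.e. and $y$ is a homeomorphism). Then the area formula holds with equality, and at every point $x$ where $y$ is differentiable with $\det\nabla y(x)>0$ the inverse $y^{-1}$ is differentiable at $y(x)$ with
\begin{equation*}
\nabla y^{-1}(y(x))=[\nabla y(x)]^{-1}=\frac{\cof\nabla y(x)^{\top}}{\det\nabla y(x)}.
\end{equation*}
Call the corresponding $\R^{n\times n}$-valued Borel function on $\Omega^y$ by $G$.

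Next I would verify that $G\in L^{1}_{\mathrm{loc}}(\Omega^y;\R^{n\times n})$ and extract the finite-distortion property. For a compact $K\subset\Omega^y$, set $K':=y^{-1}(K)\Subset\Omega$ and apply the area formula:
\begin{equation*}
\int_{K}|G(\xi)|\,\d\xi=\int_{K'}\frac{|\cof\nabla y(x)|}{\det\nabla y(x)}\,\det\nabla y(x)\,\d x=\int_{K'}|\cof\nabla y(x)|\,\d x\le c\int_{K'}|\nabla y|^{n-1}\d x<\infty,
\end{equation*}
where I used $|\cof F|\le c|F|^{n-1}$ and $y\in W^{1,n-1}_{\mathrm{loc}}$. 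Likewise, on the full-measure set where $\det\nabla y>0$, one computes $\det G(\xi)=1/\det\nabla y(y^{-1}(\xi))$ and reads off the distortion of $y^{-1}$ as $|\cof\nabla y|^{n}/(\det\nabla y)^{n-1}$ evaluated at $y^{-1}(\xi)$, which is finite. On the negligible complement one sets $K_{y^{-1}}=1$; this exceptional set has zero Lebesgue measure in $\Omega^y$ because $y$ satisfies the $(N^{-1})$ property, itself a consequence of $\det\nabla y>0$ a.e.

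The main obstacle, which is the real content of the theorem, is to show that $G$ coincides with the distributional gradient of $y^{-1}$. Since $y^{-1}$ is continuous and $G\in L^{1}_{\mathrm{loc}}$, by the ACL characterization of Sobolev maps it suffices to prove that for almost every line $\ell$ parallel to a coordinate axis meeting $\Omega^y$, the restriction $y^{-1}|_{\ell}$ is absolutely continuous with classical derivative equal to the corresponding row of $G$. The subtlety is that $y^{-1}(\ell)$ is only a continuous curve in $\Omega$, and the $W^{1,n-1}$ threshold is precisely the borderline regularity where Fubini-type slicing of the cofactor is sharp. Here I would follow the Cs\"ornyei--Hencl--Mal\'y strategy: approximate $\ell$ by thin slabs $S_h$, bound the total variation of $y^{-1}$ on $\ell$ by $\frac{1}{h}\int_{y^{-1}(S_h)}|\cof\nabla y|\,\d x$ via the coarea / change-of-variables inequality, and pass to the limit $h\downarrow 0$ using the $L^1$ bound above together with Fubini on $\Omega^y$. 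The finite-distortion hypothesis enters essentially in ruling out cancellations on the degenerate set $\{\det\nabla y=0\}$: without it, oscillations of $y^{-1}$ on short arcs could survive the limit and destroy absolute continuity. Once ACL is secured, the identification of $G$ with $\nabla y^{-1}$ is automatic.
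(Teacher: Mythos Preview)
The paper does not supply its own proof of this proposition: it is quoted as \cite[Theorem~1.2]{CHM} and used as a black box in the proof of Theorem~\ref{reverse}. There is therefore no in-paper argument to compare your proposal against.

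As for the substance of your sketch: the overall architecture (identify the candidate $G=(\cof\nabla y)^{\top}/\det\nabla y$ pushed forward, bound it in $L^1_{\mathrm{loc}}$ via the area formula and $|\cof F|\le c|F|^{n-1}$, then verify ACL on lines) is indeed the strategy of Cs\"ornyei--Hencl--Mal\'y. However, two of your ``standard facts'' are asserted too quickly at the $W^{1,n-1}$ threshold. Almost-everywhere differentiability and Lusin's condition $(N)$ are straightforward for $W^{1,n}_{\mathrm{loc}}$ homeomorphisms (cf.\ the references the paper cites in the proof of Proposition~\ref{direct}), but at exponent $n-1$ both are delicate and are in fact part of the analysis in \cite{CHM}, not prerequisites one can invoke off the shelf; your Stepanov/Mal\'y--Martio hand-wave does not cover this. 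Without Lusin $(N)$ the area-formula computation you use for the $L^1$ bound is not justified with equality. Finally, you correctly flag the ACL step as ``the real content of the theorem'' and outline the slab-averaging idea, but what you have written there is a description of the difficulty rather than a proof; making the limit $h\downarrow0$ rigorous for a.e.\ line, and controlling the contribution of $\{\det\nabla y=0\}$, is exactly where the work in \cite{CHM} lies.
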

Taking advantage of the latter result, we  can now proceed to the  proof of Theorem \ref{reverse}. 
\begin{proofreverse}
 Given Proposition \ref{direct}, we are left with the converse
statement. Namely, for all $E\subset \O$ measurable and all $y \in
W^{1,n}_{\rm loc}(\O;\R^n)$ homeomorphism of finite distorsion with
$\per(E^y,\Omega^y)<\infty$ we should find a finite Radon measure (the
\v{S}ilhav\'y measure) such that relation \eqref{eq:p_yE-1} holds. 

Let $\psi\in C^{\infty}_{\rm c}(\Omega;\R^n)$  with 
$\|\psi\|_\infty\le 1$  be given.  
Since $y$ is a homeomorphism, we have  that  $\psi\circ y^{-1}\in C^0_{\rm c}(\Omega^y;\R^n)$.
By  Proposition   \ref{simple}, we also get $\psi\circ y^{-1}\in W^{1,1}(\Omega^y;\R^n)$.
 Let $\eps>0$ and $\varphi_\eps\in C^\infty_{\rm c}(\Omega^y;\R^n)$ be
 defined by $\varphi_\eps:=(\psi\circ y^{-1})\ast\rho_\eps$, where
 $\rho_\eps(x)=\eps^{-d}\rho(x/\eps)$ and $\rho$ is the standard unit
 symmetric mollifier in $\R^n$. Notice that,  by choosing  $\e_0$
 small enough one has   that  the support of $\varphi_\eps$ is
 compact in $\Omega^y$ for any $0<\eps<\eps_0$. Moreover,
 $\|\varphi_\eps\|_\infty \le 1$ and $\varphi_\eps$ converge strongly
 to $\psi\circ y^{-1}$ in $W^{1,1}(\Omega^y;\R^n)$ as $\eps\to
 0$.  As $y$
 satisfies the Lusin's condition $N$ the area formula holds with
 equality,  hence 
 \begin{equation}\label{luz}
 \int_{E^y}\mathrm{div}(\psi\circ y^{-1})\ \d \xi=\int_{E^y}I:
 (\nabla\psi)\circ y^{-1} \,\nabla y^{-1} \, \d \xi=\int_E (\det\nabla
 y) \,I:\nabla\psi\,(\nabla y^{-1}\circ y) \, \d x.
 \end{equation}
Since $ \nabla y^{-1}(y(x))=(\nabla y(x))^{-1}$ holds at any
differentiability point $x$ of $y$ such that $\det\nabla y(x)>0$,
hence  almost everywhere  in the set $\{\det\nabla y>0\}$, from \eqref{luz} we deduce
\begin{equation}\label{I:}\begin{aligned}
\int_{E^y}\mathrm{div}(\psi\circ y^{-1})\, \d \xi&=\int_{\{\det\nabla
  y>0\}}(\det\nabla y)\,I:\nabla \psi \,(\nabla y)^{-1}\, \d x
\\&=\int_{\{\det\nabla y>0\}}\det\nabla y \,(\nabla y)^{-T}:\nabla
\psi \, \d x=\int_{E}\cof\nabla y:\nabla \psi\, \d x.
\end{aligned}\end{equation}
Notice that the last equality in \eqref{I:} follows from  the fact
that $y$ is of finite distorsion,   which implies $\cof\nabla y=0$
almost everywhere on $\{\det\nabla y=0\}$.
	Similarly, by the area formula and by \eqref{cofidentity}  we obtain
	\begin{equation}\label{similar}\begin{aligned}
	\int_E \cof\nabla y: \nabla (\varphi_\eps\circ y)\, \d x&=\int_E \det\nabla y\: I:(\nabla\varphi_\eps)\circ y \, \d x=\int_{E^y}\mathrm{div}\varphi_\eps \, \d \xi.
	\end{aligned}\end{equation}
	Since $\mathrm{div}\varphi_\eps$  converges  to $\mathrm{div}(\psi\circ y^{-1})$ in $L^1(\Omega^y)$ as $\eps\to 0$, from  \eqref{similar} we get
	\[
	\lim_{\eps\to 0}\int_E \cof\nabla y: \nabla (\varphi_\eps\circ
        y)\, \d x=\int_{E^y}\mathrm{div} (\psi\circ y^{-1})\, \d \xi.
	\]
	By combining the latter with \eqref{eq:chain} and \eqref{I:}, with we deduce
	\[\begin{aligned}
	\int_{E}\cof\nabla y:\nabla\psi\, \d x&=\lim_{\eps\to 0}\int_E
        \cof\nabla y: \nabla (\varphi_\eps\circ y)\, \d x\\
	&\le \sup\left\{\int_E \cof\nabla y: \nabla (\varphi\circ y)\,
          \d x\ | \  \varphi\in C^\infty_{\rm c}(\Omega^y;\R^n),\;\|\varphi\|_\infty\le 1\right\}\\&=\per(E^y,\Omega^y).
	\end{aligned}\]
	  We have hence checked that 
	 \[
	 \sup\left\{\int_E \cof\nabla y: \nabla \psi\, \d x \ |\  \psi\in C^\infty_{\rm c}(\Omega;\R^n),\;\|\psi\|_\infty\le 1\right\}\le\per(E^y,\Omega^y)<\infty.
	 \]
	  This implies  that the distributional divergence of $\chi_E \cof\nabla y$ is a finite measure on $\Omega$.
	\end{proofreverse}

%

The \v{S}ilhav\'y measure $\bp_{E,y}$ given by Theorem \ref{reverse} is the
distributional divergence of $-\chi_E\cof\nabla y$. Therefore,  in
order to  have that   $\per(E^y,\O^y)<\infty$, Theorem
\ref{reverse} requires  $\chi_E\cof\nabla y$ to be a divergence
measure field.  By strengthening the assumptions one may obtain
 improved characterizations of the divergence of such  fields, see
for instance \cite{ACM, ctz, S1}.  In particular, we can prove
 the following.

\begin{proposition}[Support of the \v{S}ilhav\'y measure] 
  Under the  assumptions of  Proposition \emph{\ref{direct}} 
  let  $\per(E,\O)<\infty$. Then, $\bp_{E,y}$ is concentrated on the closure of the reduced boundary of $E$ in ~$\Omega$.
\end{proposition}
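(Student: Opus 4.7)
My plan is to show that $\bp_{y,E}$ places no mass on any ball disjoint from the reduced boundary of $E$. Let me denote the reduced boundary of $E$ in $\O$ by $\partial^* E$. Fix an arbitrary point $x_0\in\O\setminus\overline{\partial^* E}$ and pick $r>0$ so small that the ball $B:=B(x_0,r)$ satisfies $B\subset\O$ and $B\cap\partial^* E=\emptyset$. Since by assumption $\per(E,\O)<\infty$ and the perimeter measure is supported on $\partial^* E$ (with density $\HH^{n-1}$), this forces $\per(E,B)=0$. As $B$ is connected, $\chi_E$ must then be essentially constant on $B$, so either $|E\cap B|=0$ or $|B\setminus E|=0$.

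The second ingredient is Piola's identity: for any $y\in W^{1,n}_{\rm loc}(\O;\R^n)$, the cofactor field $\cof\nabla y$ is row-wise distributionally divergence-free on $\O$, i.e.
$$\int_\O \cof(\nabla y):\nabla\psi\, \d x=0\qquad\forall\psi\in C^\infty_{\rm c}(\O;\R^n).$$
This is a classical result, routinely established by approximation from smooth maps whenever $p\ge n-1$, and the assumption $y\in W^{1,n}_{\rm loc}$ comfortably clears that threshold.

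Combining these two facts, I would fix $\psi\in C^\infty_{\rm c}(B;\R^n)$ and use the defining identity \eqref{eq:p_yE-1} to get
$$\int_\O \psi\cdot\d\bp_{y,E}=\int_E\cof(\nabla y):\nabla\psi\, \d x.$$
If $|E\cap B|=0$, the right-hand side vanishes because $\psi$ is supported in $B$. If instead $|B\setminus E|=0$, then up to a negligible set the integration over $E$ coincides with integration over $B$, and $\int_B \cof(\nabla y):\nabla\psi\, \d x=0$ by Piola. Either way, $\bp_{y,E}$ vanishes on $B$. Since $x_0\in\O\setminus\overline{\partial^* E}$ was arbitrary, we conclude $\mathrm{supp}\,\bp_{y,E}\subset\overline{\partial^* E}$, as desired.

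The only genuinely delicate input is Piola's identity at the Sobolev regularity under consideration; everything else is bookkeeping, combining the structure theorem for sets of finite perimeter (zero perimeter on a connected open set forces essentially constant characteristic function) with the characterization \eqref{eq:p_yE-1} of $\bp_{y,E}$ coming from Theorem~\ref{reverse}.
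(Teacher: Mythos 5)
Your proof is correct, but it is organized differently from the paper's. The paper mollifies the deformation, setting $y_\eps:=y\ast\rho_\eps$, observes that for the smooth map $y_\eps$ the product rule and the classical Piola identity give that $\mathrm{div}(\chi_E\cof\nabla y_\eps)=\cof\nabla y_\eps\,\nabla\chi_E$ is a measure concentrated on $\partial^*E$, and then passes to the limit using $\cof\nabla y_\eps\to\cof\nabla y$ in $L^{n/(n-1)}_{\rm loc}$ together with lower semicontinuity of the support under distributional convergence. You instead localize: away from $\overline{\partial^*E}$ the set $E$ has zero perimeter in any small ball, hence $\chi_E$ is essentially $0$ or $1$ there, and in either case the defining integral $\int_E\cof(\nabla y):\nabla\psi\,\d x$ vanishes for test functions supported in that ball --- trivially in the first case, and by the distributional Piola identity for $W^{1,n}_{\rm loc}$ maps in the second. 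The two arguments lean on the same key fact (the divergence-free character of the cofactor), but you invoke it directly at Sobolev regularity, whereas the paper only uses the smooth version and transfers it to $y$ by approximation; indeed the paper's mollification step is essentially a proof of the distributional identity you cite, so the logical content is the same. Your version has the small advantage of making explicit \emph{why} the measure ignores the interior and exterior of $E$ (zero perimeter forces local constancy of $\chi_E$), while the paper's version packages everything into one weak-convergence step; both are complete, and your appeal to the Piola identity at $p=n$ is legitimate since $n$ comfortably exceeds the $n-1$ threshold needed for $\cof\nabla y$ to be locally integrable and for the identity to pass to the limit under mollification.
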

\begin{proof}
Let $y_\eps:=y\ast\rho_\eps$, with
$\rho_\eps(x)=\eps^{-d}\rho(x/\eps)$ and $\rho$  be  the standard
mollifier.  Since  $y_\eps$ is smooth,  $\chi_E$ is a function
of bounded variation, and the cofactor is divergence-free,  we
readily have that 
$\mathrm{div}(\chi_E\cof\nabla y_\eps)=\cof\nabla y_\eps\nabla \chi_E$ is a measure concentrated on the reduced boundary of $E$ in $\Omega$.
Notice that $\cof\nabla y_\eps$  converges to  $\cof\nabla y$
in $L^{n/(n-1)}_{\rm loc}(\Omega; \R^{n\times n})$, so that integration by parts entails
\[
-\int_\Omega\psi\cdot \md(\mathrm{div}(\chi_E\cof\nabla y_\eps))
=\int_{\Omega}\chi_E\cof\nabla y_\eps:\nabla\psi\, \d x
\to\int_E\cof\nabla y:\nabla\psi\, \d x=\int_\Omega\psi\cdot \d\bp_{E,y}
\]
as $\eps\to 0$, for every $\psi\in
C^\infty_{\rm c}(\Omega;\mathbb{R}^n)$.  For all $\eps>0$, the measure
 $\mathrm{div}(\chi_E\cof\nabla y_\eps)$ is concentrated on the
reduced boundary of $E$ in $\Omega$. We hence conclude that   $\bp_{E,y}$ is concentrated on the closure of the reduced boundary.
\end{proof}

 In case $y^{-1}\in W^{1,n}_{\rm loc}(\Omega^y;\R^n)$ the
 characterization of Theorem \ref{th2} can be applied to the inverse
 deformation $y^{-1}$. Note that such regularity of the inverse follows for
 instance for mappings with  $L^{n-1}$ distorsion, see \cite{HKM}. Therefore, We have the
 following 
	
	\begin{corollary}[Characterization for the inverse deformation] Suppose that $E\subset\Omega$ is a
          measurable set and that $y\in W^{1,n}_{\rm
            loc}(\Omega;\R^n)$ is a homeomorphism of finite distorsion
           with $K_y\in L^{n-1}(\Omega)$. 
	Then, $\per(E,\Omega)<\infty$ if and only if the distribution $\bp_{E^y,y^{-1}}:=-\mathrm{div} (\chi_{E^y}\cof\nabla y^{-1})$ is a finite Radon measure on $\Omega^y$.
	\end{corollary}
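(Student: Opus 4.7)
The plan is to deduce this Corollary by applying the Characterization Theorem \ref{reverse} directly to the inverse deformation $y^{-1}$, with the set $E^y$ playing the role of $E$. To make this viable, the first task is to check that $y^{-1}$ falls within the hypotheses of Theorem \ref{reverse} on the domain $\Omega^y$. Concretely, one has to verify that $y^{-1}: \Omega^y \to \Omega$ is (i) a homeomorphism, (ii) in $W^{1,n}_{\rm loc}(\Omega^y;\R^n)$, and (iii) of finite distorsion. Item (i) is immediate from $y$ being a homeomorphism. Item (iii) follows from Proposition \ref{simple}, which asserts that the inverse of any Sobolev homeomorphism of finite distorsion is itself a Sobolev mapping of finite distorsion (noting that $W^{1,n}_{\rm loc}\subset W^{1,n-1}_{\rm loc}$).

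Item (ii) is the key analytical input, and it is precisely where the assumption $K_y\in L^{n-1}(\Omega)$ enters. I would cite the regularity result of Hencl, Koskela, and Mal\'y \cite{HKM} (as already alluded to in the paragraph preceding the Corollary), which ensures that a Sobolev homeomorphism of finite distorsion whose distorsion function lies in $L^{n-1}$ has an inverse in $W^{1,n}_{\rm loc}$ of the deformed configuration.

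Once these three items are verified, Theorem \ref{reverse} applied to the pair $(y^{-1}, E^y)$ yields: $\per\bigl((E^y)^{y^{-1}},(\Omega^y)^{y^{-1}}\bigr)<\infty$ if and only if there exists a finite Radon measure $\bp_{y^{-1}, E^y}\in \mathcal{M}(\Omega^y;\R^n)$ such that
\begin{equation*}
\int_{E^y}\cof(\nabla y^{-1}):\nabla \psi\, \d\xi=\int_{\Omega^y}\psi \cdot \d\bp_{y^{-1}, E^y}\qquad \forall \psi\in C^\infty_{\rm c}(\Omega^y;\R^n).
\end{equation*}
Since $y$ is a homeomorphism, $(E^y)^{y^{-1}}=E$ and $(\Omega^y)^{y^{-1}}=\Omega$, so the left-hand side of the equivalence becomes $\per(E,\Omega)<\infty$. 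The displayed integral identity is exactly the statement that the distributional divergence of $-\chi_{E^y}\cof \nabla y^{-1}$ on $\Omega^y$ coincides with the finite Radon measure $\bp_{E^y,y^{-1}}$, which is the right-hand side of the claimed equivalence.

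The main obstacle is the verification of the Sobolev regularity (ii) of $y^{-1}$: Proposition \ref{simple} alone only guarantees $y^{-1}\in W^{1,1}_{\rm loc}$, which is not enough to invoke Theorem \ref{reverse} on $(\Omega^y, y^{-1})$. The additional assumption $K_y\in L^{n-1}(\Omega)$ is exactly the threshold identified in \cite{HKM} that upgrades the inverse to the required $W^{1,n}_{\rm loc}$ regularity; once this is in hand, the Corollary is a clean restatement of Theorem \ref{reverse} applied to the inverse map.
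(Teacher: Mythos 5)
Your proposal is correct and follows exactly the route the paper itself indicates in the sentence preceding the Corollary: apply Theorem \ref{reverse} to the pair $(y^{-1},E^y)$ on $\Omega^y$, using \cite{HKM} to upgrade $y^{-1}$ to $W^{1,n}_{\rm loc}$ from the $L^{n-1}$ bound on $K_y$, and Proposition \ref{simple} for the finite distorsion of the inverse. Your write-up is in fact more detailed than the paper's one-line justification, and correctly identifies the $L^{n-1}$ distorsion hypothesis as the essential ingredient.
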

%
%
\section{ Existence of minimizers: Proof of Theorem \ref{th1}}\label{conve}
%
 The aim of this section is to discuss the existence of minimizers
of both $\mathcal{F}_0$ and $\mathcal{F}_\eps$ on the respective sets
of admissible deformations. This in
particular proves Theorem \ref{th1} as well as the existence statement
in Theorem \ref{th2}.

We
start by establishing some preliminary result  on the convergence
of the deformed domains and phase configurations associated to a
$\bbY$-converging sequence of deformations. A  crucial tool in
this direction is the  semicontinuity of the perimeter in the
deformed configuration, when both the ambient sets $\O^{y_k}$ and the
finite perimeter sets $F_k\subset \O^{y_k}$ vary along a sequence, see
 Proposition  \ref{lem:2}. This will prove to be essential for the
$\Gamma$-limit result stated in Section \ref{lastsection}.  

We shall make use of the following equiintegrability result for
inverse Jacobians of mappings of integrable distorsion, which  is
inspired by the work of  Onninen and Tengvall \cite{OT}.
\begin{lemma}[Equiintegrability of $\det\nabla y_k^{-1}$]\label{lem:equi}
	Let $y_k:\Omega\to\Omega^{y_k}$ be  homeomorphisms with  uniformly
        $L^q$-integrable distorsion  for $q > 2$ (namely,
        $\|K_{y_k}\|_{L^q(\O)}$ is bounded independently of $k$). Then, $\det\nabla y_k^{-1}$ are equiintegrable on $\Omega^{y_k}$.
\end{lemma}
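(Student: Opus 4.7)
The plan is to verify the de la Vallée Poussin criterion for equiintegrability, i.e., to exhibit a convex superlinear function $\theta$ with $\sup_k\int_{\Omega^{y_k}}\theta(\det\nabla y_k^{-1})\,d\xi<\infty$. The area formula for $y_k$, applicable because $y_k\in W^{1,p}(\Omega;\mathbb{R}^3)$ with $p>n=3$ is continuous and hence satisfies Lusin's condition $N$, together with the pointwise identity $\det\nabla y_k^{-1}(y_k(x))\det\nabla y_k(x)=1$ (valid a.e.\ in $\Omega$, since $K_{y_k}<+\infty$ a.e.\ forces $\det\nabla y_k>0$ a.e.), yields for every Borel $\theta\geq 0$ the reduction
\begin{equation*}
\int_{\Omega^{y_k}}\theta(\det\nabla y_k^{-1})\,d\xi \;=\; \int_\Omega \theta(1/\det\nabla y_k)\,\det\nabla y_k\,dx.
\end{equation*}
Specializing to $\theta(s)=s$ already produces the uniform $L^1$-bound $\int_{\Omega^{y_k}}\det\nabla y_k^{-1}\,d\xi=|\Omega|$.

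The second step is a uniform higher-integrability estimate for $\nabla y_k^{-1}$. Combining the elementary matrix inequality $|A^{-1}|\leq C|A|^{n-1}/\det A$ with the distortion inequality $|\nabla y_k|^n\leq K_{y_k}\det\nabla y_k$, one gets $|\nabla y_k^{-1}(y_k(x))|^n\det\nabla y_k(x)\leq C\,K_{y_k}^{n-1}(x)$ almost everywhere in $\Omega$. Integrating via the area formula gives
\begin{equation*}
\int_{\Omega^{y_k}}|\nabla y_k^{-1}|^n\,d\xi\leq C\int_\Omega K_{y_k}^{n-1}\,dx\leq C'\|K_{y_k}\|_{L^q(\Omega)}^{n-1}|\Omega|^{1-(n-1)/q},
\end{equation*}
uniformly bounded thanks to the strict inequality $q>n-1$. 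By Proposition~\ref{simple}, $y_k^{-1}$ is itself a Sobolev map of finite distortion, and the above estimate shows that the family $\{y_k^{-1}\}$ is uniformly bounded in $W^{1,n}(\Omega^{y_k};\mathbb{R}^n)$.

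Finally, invoking a Müller-type higher-integrability theorem for non-negative Jacobians of $W^{1,n}$-Sobolev maps, the uniform $L^n$-bound on $\nabla y_k^{-1}$ upgrades to a uniform $L\log L$-bound
\begin{equation*}
\sup_k\int_{\Omega^{y_k}}\det\nabla y_k^{-1}\,\log\bigl(e+\det\nabla y_k^{-1}\bigr)\,d\xi<\infty,
\end{equation*}
from which the equiintegrability of $\{\det\nabla y_k^{-1}\}$ follows by de la Vallée Poussin applied to $\theta(s)=s\log(e+s)$. The main obstacle is to make the higher-integrability estimate uniform across the varying Eulerian domains $\Omega^{y_k}$: the local version is standard from the $L^n$-bound on $\nabla y_k^{-1}$, but the globality requires controlling contributions near $\partial\Omega^{y_k}$, and this is precisely where the strict inequality $q>n-1$ enters, providing the integrability margin needed to absorb such boundary terms, in the spirit of the Onninen-Tengvall analysis of mappings with $L^p$-integrable distortion.
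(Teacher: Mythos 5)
Your first two steps are sound: the change-of-variables reduction giving $\int_{\Omega^{y_k}}\det\nabla y_k^{-1}\,d\xi=|\Omega|$, and the pointwise estimate $|\nabla y_k^{-1}(y_k(x))|^n\det\nabla y_k(x)\le C K_{y_k}^{n-1}(x)$ leading to a uniform bound on $\|\nabla y_k^{-1}\|_{L^n(\Omega^{y_k})}$, are both correct (and the latter is indeed an ingredient of the Onninen--Tengvall machinery). The gap is in the final step, and you have in fact named it yourself without closing it. M\"uller's higher-integrability theorem for non-negative Jacobians of $W^{1,n}$ maps is intrinsically \emph{local}: it yields $\det\nabla u\,\log(e+\det\nabla u)\in L^1_{\rm loc}$ with constants that blow up as one approaches the boundary, and a global $L\log L$ bound is simply false for general $W^{1,n}$ maps with non-negative Jacobian. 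Since the target domains $\Omega^{y_k}$ vary with $k$, there is no fixed compact exhaustion on which to apply the local statement uniformly, so the sentence ``the globality requires controlling contributions near $\partial\Omega^{y_k}$ \dots in the spirit of the Onninen--Tengvall analysis'' is precisely the missing proof, not a remark about it. No quantitative mechanism is offered for how $q>2$ ``absorbs boundary terms,'' and the claimed weight $s\log(e+s)$ is not what the available estimates produce.

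The paper's proof avoids this entirely by invoking the \emph{global} theorem of Onninen and Tengvall \cite[Thm.~1.4]{OT} for homeomorphisms of $L^q$-integrable distortion: $\int_{\Omega^{y_k}}|\nabla y_k^{-1}|^3\log^{s}({\rm e}+|\nabla y_k^{-1}|)\,d\xi\le C\int_\Omega K_{y_k}^q\,dx$ with $s=2(q-2)>0$, and then converts this into a superlinear bound on the Jacobian via the elementary inequality $|\det F|\le 6|F|^3$, after which de la Vall\'ee Poussin applies. Note also that the exponent on the logarithm one actually obtains is $s=2(q-2)$, which degenerates as $q\downarrow 2$; this is harmless for equiintegrability but shows that the power $1$ you wrote is not derivable from the hypotheses. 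To repair your argument you must either quote the global Onninen--Tengvall estimate directly (as the paper does) or reproduce its proof; the local M\"uller theorem plus a uniform $W^{1,n}$ bound on the inverses is not sufficient. A secondary, minor inaccuracy: finite distortion ($K_{y_k}<\infty$ a.e.) does not by itself force $\det\nabla y_k>0$ a.e., since $\nabla y_k=0$ is allowed on a set of positive measure; the a.e.\ positivity of the Jacobian comes from the definition of $\bbY$ in the paper, not from the distortion bound.
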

\begin{proof} 
	 From \cite[Theorem 1.4]{OT} we have  that
	\[
	\int_{\Omega^{y_k}} |\nabla y_k^{-1}|^3\log^s({\rm e}+|\nabla y_k^{-1}|)\,\d\xi\le C\int_\Omega K_{y_k}^q\,\d\xi,
	\]
	where $s=2(q-2)$ and $C$ is a constant depending only on $q$. Notice that by the elementary inequality $|\det  F|\leq 6| F|^3 $,  we have
	\[
	|\nabla y_k^{-1}|^3\log^s({\rm e}+|\nabla y_k^{-1}|)\ge \frac16 3^{-s} \det\nabla y^{-1}_k\log^s\left({\rm e}^3+\frac16\det\nabla y_k^{-1}\right).
	\]
	We conclude that 
	\[
	\int_{\Omega^{y_k}}\det\nabla y_k^{-1}\log^s\left({\rm e}^3+\frac16\det\nabla y_k^{-1}\right)\,\d\xi\le  C' \int_\Omega K_{y_k}^q\,\d\xi,
	\]
	where $ C'$ depends only on $q$.  The latter right-hand
        side is uniformly bounded. This entails that the
        superlinear function of the determinant on the left-hand side
        is uniformly bounded as well. This implies the
        equiintegrability of the sequence of the determinants of the inverses. 
\end{proof}

\begin{lemma}[Convergence of deformed configurations]\label{lem:0}
	Let $y,y_k\in\bbY$ such that $y_k\to y$ weakly in $W^{1,p}$,
        $p>3$ (hence uniformly).  Then, 
	\begin{itemize}
		\item [(i)] For any open sets $A,O$ such that
                  $A\subset\subset\O^y\subset\subset O$, one has
                  $A\subset\O^{y_k}\subset O$  for  $k$ large
                  enough. In  particular, $|\O^y\Delta\O^{y_k}|\to 0 $;
		\item [(ii)] If $\|K_{y_k}\|_{L^q(\O)}\leq c $ 
                  uniformly,  by  letting $O^k:=\O^y\cap\O^{y_k}$, there holds
		$$
		|\O\setminus(y^{-1}(O_k)\cap y_k^{-1}(O_k))|\to 0.
		$$
	\end{itemize}
\end{lemma}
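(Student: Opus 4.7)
The plan is to leverage the compact Sobolev embedding $W^{1,p}(\O;\R^3) \hookrightarrow C^0(\bar\O;\R^3)$, which is valid since $p>3$ and $\O$ is bounded Lipschitz and upgrades the weak convergence $y_k \weak y$ to uniform convergence on $\bar\O$. Part (i) will then follow from a Brouwer-degree argument, while part (ii) will follow from the area formula combined with Lemma~\ref{lem:equi}.

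For \emph{Part (i)}, the upper inclusion is immediate: $y(\bar\O) = \overline{\O^y}$ is compact and contained in the open set $O$, so uniform convergence gives $y_k(\bar\O) \subset O$ for $k$ large, whence $\O^{y_k} \subset O$. For the lower inclusion $A \subset \O^{y_k}$, I would fix an open $U$ with $y^{-1}(\bar A) \subset U \subset\subset \O$. Since $y$ is an orientation-preserving homeomorphism by Theorem~\ref{Thm:MIE}, one has $y(U) \supset \bar A$ and $y(\partial U) \cap \bar A = \emptyset$, so $c := \dist(y(\partial U), \bar A) > 0$. For $k$ large, $\|y_k - y\|_{C^0(\bar\O)} < c/2$, and for every $\xi \in \bar A$ the affine homotopy $y_t := (1-t) y + t y_k$ satisfies $|y_t(x) - \xi| \ge c/2$ for all $x \in \partial U$, $t \in [0,1]$. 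Homotopy invariance of the Brouwer degree then yields $\deg(y_k, U, \xi) = \deg(y, U, \xi) = 1$, hence $\xi \in y_k(U) \subset \O^{y_k}$. To deduce $|\O^y \Delta \O^{y_k}| \to 0$, I would approximate $\O^y$ from inside by a compact $\bar A$ and from outside by an open $O$ with $|\O^y \setminus A| + |O \setminus \O^y| < 2\epsilon$ (inner and outer regularity of the Lebesgue measure), and then conclude via the two inclusions.

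For \emph{Part (ii)}, since $y$ and $y_k$ are homeomorphisms, the set $\O \setminus (y^{-1}(O^k) \cap y_k^{-1}(O^k))$ equals $y^{-1}(\O^y \setminus \O^{y_k}) \cup y_k^{-1}(\O^{y_k} \setminus \O^y)$. Both maps satisfy Lusin's condition $N$ (being continuous $W^{1,p}$ maps with $p>3$) and have positive Jacobian, so combining the area formula with the identity $\det \nabla y^{-1}(y(x)) = 1/\det \nabla y(x)$ (ensured by Proposition~\ref{simple}) I obtain
\[
|y^{-1}(B)| = \int_B \det \nabla y^{-1}\, \md \xi, \qquad |y_k^{-1}(B')| = \int_{B'} \det \nabla y_k^{-1}\, \md \xi
\]
for measurable $B \subset \O^y$ and $B' \subset \O^{y_k}$. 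The first summand $|y^{-1}(\O^y \setminus \O^{y_k})|$ vanishes in the limit by the absolute continuity of the fixed integrable function $\det \nabla y^{-1}$ over the sets $\O^y \setminus \O^{y_k}$, which have vanishing measure by (i). For the second summand $|y_k^{-1}(\O^{y_k} \setminus \O^y)|$, I would extend $\det \nabla y_k^{-1}$ by zero outside $\O^{y_k}$ and invoke Lemma~\ref{lem:equi}: the resulting family is equiintegrable on any bounded ambient set, so its integral over $\O^{y_k} \setminus \O^y$ vanishes as $|\O^{y_k} \setminus \O^y| \to 0$. The main obstacle is precisely this last step, since the domains $\O^{y_k}$ move with $k$ and no single dominating function is available; the uniform $L^q$-bound on the distortions is exactly what unlocks the equiintegrability and, through it, the desired convergence.
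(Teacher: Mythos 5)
Your argument for part (ii) is essentially the paper's: the decomposition of $\O\setminus\bigl(y^{-1}(O^k)\cap y_k^{-1}(O^k)\bigr)$ into $y^{-1}(\O^y\setminus\O^{y_k})$ and $y_k^{-1}(\O^{y_k}\setminus\O^y)$, with the first piece handled by absolute continuity of the fixed $L^1$ function $\det\nabla y^{-1}$ and the second by the equiintegrability supplied by Lemma~\ref{lem:equi}, is exactly how the paper proceeds (there it is phrased as $|y^{-1}(O_k)|\to|\O|$ and $|y_k^{-1}(O_k)|\to|\O|$). In part (i) you replace the paper's boundary-displacement argument (every point of $\partial(y_k(U))$ lies within $\delta$ of $\partial(y(U))$, hence misses $\overline A$) by homotopy invariance of the Brouwer degree; this is a sound alternative and in fact makes explicit the topological step that the paper's wording leaves implicit.

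There is, however, one genuine gap: the passage from the two inclusions to $|\O^y\Delta\O^{y_k}|\to 0$. The outer inclusion in (i) requires $\O^y\subset\subset O$, i.e.\ $O\supset\overline{\O^y}$. Outer regularity of the Lebesgue measure produces open sets $O\supset\O^y$ with $|O\setminus\O^y|$ small, but such sets need not contain $\overline{\O^y}$; and any open $O$ containing $\overline{\O^y}$ satisfies $|O\setminus\O^y|\ge|\overline{\O^y}\setminus\O^y|=|\partial\O^y|$. So your scheme controls $|\O^{y_k}\setminus\O^y|$ only once you know $|\partial\O^y|=0$, which is not automatic for the image of a Sobolev homeomorphism. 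The paper closes exactly this point: it extends $y$ to a $W^{1,p}$ map on a neighborhood of $\overline\O$ (using that $\O$ is a bounded Lipschitz domain), invokes Lusin's condition $N$ for the extension together with $y(\partial\O)\supset\partial(\O^y)$ and $|\partial\O|=0$, and concludes $|\overline{\O^y}|=|\O^y|$. You need to add this step; without it the existence of admissible outer sets $O$ with $|O\setminus\O^y|<2\epsilon$ is unjustified.
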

\begin{proof}
{ Ad	(i):}   Let $V$ be open and  such that
$A\subset\subset V\subset\subset\O^y $. Since $\overline{A}$ and
$\partial V$ are disjoint compact sets,  we have that 
$d(\overline{A},\partial V)=:2\delta>0$.  Let
$U=y^{-1}(V)\subset\subset\O$ and  $V_k=y_k(U)$. Since $y,y^k\in\bbY$
are homeomorphisms on $\overline{U}$, we have $\partial V=y(\partial
U)$ and $\partial V_k=y_k(\partial U)$.  As $p>3$,   we have that
 $y_k\to y$ in $C(\overline{\O};\Rz^3)$,   thus 
$\|y-y_k\|_\infty<\delta$  for $k$ large enough.   Hence, for
any boundary point $\xi\in\partial V_k$,   we have that 
$d(\xi,\partial V)<\delta$  for $k$ large,  which yields
$\overline{A}\subset V_k\subset\O^{y_k} $ owing to
$d(\overline{A},\partial V)=2\delta $.

 As  $O\supset\supset \O^y$ we deduce as above  that 
$d(\partial O,\overline{\O^y})=:2\delta$ for some $\delta>0$, which
immediately yields the inclusion $\overline{\O^y}+B(0,\delta)\subset
O$. Then,  since $\|y-y_k\|_\infty<\delta$ we have that
 $$ \O^{y_k}  \subset\O^y+B(0,\delta)  \subset \overline{\O^y} +
B(0,\delta) \subset O.$$

 In order to check that  $|\O^y\Delta\O^{y_k}|\to 0 $, observe
that $\O^y$ can be approximated in measure by open sets $ A_\ell
\subset\subset\O^y$ ($\O^y$ can be approximated by  internal
 compact sets). Moreover, $\overline{\O^y}$ can be approximated in
measure by external open sets $ O_\ell \supset\overline{\O^y}
$.  Since 
	$\O$ is a bounded Lipschitz domain, by Lusin's $N$ property
         for  (a $W^{1,p}$  extension of) $y$ and the fact $y(\partial\O)\supset\partial(\O^y) $, it follows that $|\partial(\O^y)|=0$, i.e. $|\overline{\O^y}|=|\O^y| $.

{ Ad (ii):}  Since $y^{-1}(O_k)\subset \O $ and  $ y_k^{-1}(O_k)\subset \O$, it is sufficient to prove
	$$
	|y^{-1}(O_k)|\to |\O|,\qquad |y_k^{-1}(O_k)|\to |\O|.
	$$
	Firstly,   $|\O^y\setminus O^k|\to 0$ by (i). Hence, since $\det\nabla y^{-1}\in L^1(\Omega^y)$,
	$$
	|y^{-1}(O_k)|=\int_{O^k}\det\nabla y^{-1}\,\d\xi\to \int_{\O^y}\det\nabla y^{-1}\,\d\xi=|\O|.
	$$
	Secondly,
	\begin{align*}
	|y_k^{-1}(O_k)|=\int_{O^k}\det\nabla y_k^{-1}\,\d\xi=\int_{\O^{y_k}}\det \nabla y_k^{-1}\,\d\xi-\int_{\O^{y_k}\setminus O^k}\det\nabla y_k^{-1}\,\d\xi=\\
	=|\O|-\int_{\O^{y_k}\setminus \O^y}\det\nabla y_k^{-1}\,\d\xi.
	\end{align*}
	 By  Lemma \ref{lem:equi}, the determinants $\nabla
        y_k^{-1} $ are  equiintegrable. 
	Since  $ |\O^{y_k}\setminus \O^y|\to 0$, the statement follows.
\end{proof}
\begin{lemma}[Convergence of the phases]\label{lem:1}
	Let $y,y_k\in\bbY$ such that $y_k\to y$ weakly in $W^{1,p}$,
        for $p>3$, and have uniformly $L^q$-bounded distorsion, for $q >
        2$.   Let $\z\in L^\infty(\O^{y},[0,1])$ and $\z_k\in L^\infty(\O^{y_k},[0,1]) $. Finally, let $z=\z\circ y,z_k=\z_k\circ y_k\in L^\infty(\O;[0,1]) $  and $O_k:=\O^y\cap\O^{y_k}$. Then,
	$$
	\|\z-\z_k\|_{L^1(O_k)}\to 0\;\;\Rightarrow \;\;	\|z-z_k\|_{L^1(\O)}\to 0.
	$$
\end{lemma}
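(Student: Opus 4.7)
The plan is to reduce the problem to the common domain $\Omega_k^{\ast}:=y^{-1}(O_k)\cap y_k^{-1}(O_k)$ and then to exploit a triangle inequality that isolates a \emph{comparison} term (in the integrated field) from a \emph{shift} term (in the spatial argument). By Lemma~\ref{lem:0}(ii) one has $|\O\setminus\Omega_k^{\ast}|\to 0$, and since $z,z_k\in[0,1]$ the bound $|z-z_k|\le 1$ makes the contribution of $\O\setminus\Omega_k^{\ast}$ to $\|z-z_k\|_{L^1(\O)}$ vanish. On $\Omega_k^{\ast}$ both $y(x),y_k(x)\in O_k$, so all of the compositions below are well defined, and one may write
\begin{equation*}
|\z(y(x))-\z_k(y_k(x))|\le |\z(y(x))-\z(y_k(x))|+|\z(y_k(x))-\z_k(y_k(x))|,
\end{equation*}
which reduces matters to controlling the two summands in $L^1(\Omega_k^{\ast})$.

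For the comparison term, a change of variables $\xi=y_k(x)$---valid by the area formula, since $y_k\in\bbY$ is a homeomorphism of finite distorsion satisfying Lusin's condition $N$---gives
\begin{equation*}
\int_{\Omega_k^{\ast}}|\z(y_k(x))-\z_k(y_k(x))|\,\d x\le\int_{O_k}|\z-\z_k|\,\det\nabla y_k^{-1}\,\d\xi.
\end{equation*}
A standard level-set splitting at threshold $1/M$, together with the hypothesis $\|\z-\z_k\|_{L^1(O_k)}\to 0$, the bound $|\z-\z_k|\le 1$, and the equiintegrability of $\det\nabla y_k^{-1}$ granted by Lemma~\ref{lem:equi} (itself relying on the uniform $L^q$-bound on $K_{y_k}$), shows that the right-hand side tends to zero.

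The shift term is the main obstacle: although $y_k\to y$ uniformly on $\overline\O$ by the compact embedding $W^{1,p}(\O)\hookrightarrow C^0(\overline\O)$ for $p>3$, the function $\z$ need not be continuous, so composition with the uniform limit cannot be passed directly. I would approximate $\z$ in $L^1(\O^y)$ by $\tilde{\z}_\eta\in C_{\rm c}(\O^y)$ with $|\tilde{\z}_\eta|\le 1$ and $\|\z-\tilde{\z}_\eta\|_{L^1(\O^y)}<\eta$, extended by zero outside $\O^y$. A third triangle inequality yields
\begin{equation*}
|\z(y(x))-\z(y_k(x))|\le|\z-\tilde{\z}_\eta|(y(x))+|\tilde{\z}_\eta(y(x))-\tilde{\z}_\eta(y_k(x))|+|\z-\tilde{\z}_\eta|(y_k(x)),
\end{equation*}
where the middle term converges to zero uniformly in $x\in\O$ by uniform continuity of $\tilde{\z}_\eta$ and $\|y_k-y\|_\infty\to 0$, while the first and third terms reduce, via changes of variables under $y$ and $y_k$ respectively, to $\int_{\O^y}|\z-\tilde{\z}_\eta|\det\nabla y^{-1}\,\d\xi$ (small for $\eta$ small, since $\det\nabla y^{-1}\in L^1(\O^y)$ is a fixed weight and the integrand is bounded by $2$) and to $\int_{O_k}|\z-\tilde{\z}_\eta|\det\nabla y_k^{-1}\,\d\xi$ (small uniformly in $k$ for $\eta$ small, by the same level-set splitting and equiintegrability argument as above). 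Passing first to $k\to\infty$ and then to $\eta\to 0$ concludes the proof; the crucial quantitative input throughout the last step is the uniform $L^q$-distorsion bound on $\{y_k\}$, which via Lemma~\ref{lem:equi} prevents the remainder terms involving $\det\nabla y_k^{-1}$ from concentrating along the sequence.
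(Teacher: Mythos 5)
Your proposal is correct and follows essentially the same route as the paper: the same reduction to $y^{-1}(O_k)\cap y_k^{-1}(O_k)$ via Lemma \ref{lem:0}(ii), the same two-term triangle inequality, the change of variables plus level-set splitting plus equiintegrability of $\det\nabla y_k^{-1}$ (Lemma \ref{lem:equi}) for the comparison term, and the same three-term decomposition with a uniformly continuous approximant of $\z$ for the shift term. The only cosmetic difference is that the paper packages the treatment of the remainder terms involving $\det\nabla y^{-1}$ and $\det\nabla y_k^{-1}$ through a single modulus of equiintegrability for the whole family, whereas you handle the fixed weight and the sequence separately.
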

\begin{proof}  By  introducing the shorthand
  $E_k:=y^{-1}(O^k)\cap y_k^{-1}(O^k)\subset\O$,  we start by
  observing that 
	$$
	\|z_k-z\|_{L^1(\O)}\leq |\O\setminus E_k|+\|z_k-z\|_{L^1(E_k)}.
	$$
	 As   $|\O\setminus E_k|\to 0$ by Lemma
        \ref{lem:0},  we are left to prove that
        $+\|z_k-z\|_{L^1(E_k)}\to 0$.  One uses  the triangle inequality to write
	$$
	\|z_k-z\|_{L^1(E_k)}\leq I_k^{(1)}+I_k^{(2)},
	$$
	with
	\begin{align*}
	I_k^{(1)}:=\|\z_k\circ y_k-\z\circ y_k\|_{L^1(E_k)},\qquad I_k^{(2)}:=\|\z\circ y_k-\z\circ y\|_{L^1(E_k)}
	\end{align*}
 The  $L^q$-bound on the  distortion and Lemma \ref{lem:equi}
 entail that   the sequence  $\det\nabla y_k^{-1} $ is 
equiintegrable.  Let  $\rho:[0,+\infty) \to [0,+\infty)$
(monotonically increasing) be a modulus of  equiintegrability for
the sets  $\{\det\nabla y_k^{-1}\}_{k\ge 1}\cup \{\det\nabla
y^{-1}\} $, i.e., for any measurable set  $A\subset\Rz^3$  we ask
for $\lim_{t\to 0+}\rho(t)=0 $ and 
	\begin{align*}
	\int_{ \O^{y}\cap A}\det\nabla y^{-1}\,\d
        \xi\,\vee\,\sup_{k}\int_{ \O^{y_k}\cap A}\det\nabla
        y_k^{-1}\,\d \xi\le \rho(|A|) .
	\end{align*}
	Now, fix $\delta>0$  and  change  variable
        $x\mapsto\xi$ in the integral in $ I_k^{(1)} $  getting 
	\begin{align}\label{eq:I1}
	I_k^{(1)}=\int_{ y_k(E_k)}\det\nabla y_k^{-1}|\z_k-\z|\,\d \xi\leq \rho(|A_k(\delta)|)+\delta|\O|,  
	\end{align}
	where  $ A_k(\delta):=\{\xi\in O^k\ | \ |\z_k(\xi)-\z(\xi)|>\delta\}$. 
 Since  $\|\z-\z_k\|_{L^1(O_k)}\to 0$  one has that 
$|A_k(\delta)|<\delta$  for $k$ large enough. 

 In order to control  $I_k^{(2)}$, let $\overline{\z}_\delta\in C^0(\overline{\O^y},\Rz)$ be a (uniformly) continuous  $L^1$ approximation of $\z$ such that $
	\|\overline{\z}_\delta-\z\|_{L^{1}(\O^y)}
	$ is so small that 
	$$
	|\overline A(\delta)|<\delta \ \  \text{for} \ \  \overline
        A(\delta):=\{\xi\in \O^y\ | \ |\overline{\z}_\delta(\xi)-\z(\xi)|>\delta\}  
	$$	
	We write  $	I_k^{(2)}\leq 	J_k^{(1)}+	J_k^{(2)}+	J_k^{(3)}$, with
	\begin{align*}
	J_k^{(1)}=\|\z\circ y_k-\overline{\z}_\delta\circ y_k\|_{L^1(E_k)},\;
	J_k^{(2)}=\|\overline{\z}_\delta\circ y_k-\overline{\z}_\delta\circ y\|_{L^1(E_k)},\;
	J_k^{(3)}=\|\overline{\z}_\delta\circ y-\z\circ y\|_{L^1(E_k)}.
	\end{align*}
	Now, similarly to \eqref{eq:I1}, we can write
	\begin{align}\label{eq:J1}
	J_k^{(1)}+J_k^{(3)}\le 2\rho(|\overline A(\delta)|)+2\delta|\O|\leq 2\big(\rho(\delta)+\delta |\O|\big).
	\end{align}
	Finally, since $\overline{\z}_\delta$ is uniformly continuous and $|\O|<+\infty$, if $\omega_\delta$ is the modulus of uniform continuity of $\overline{\zeta}_\delta$,  we get
	\begin{align}\label{eq:J2}
	J_k^{(2)}\leq \omega_\delta(\|y-y_k\|_\infty)|\O|.
	\end{align}
	By combining \eqref{eq:I1}  and  \eqref{eq:J2}  and
        using the fact that $\delta$ is arbitrary,  we  obtain the statement.
\end{proof}	
The following result concerns the semicontinuity of the perimeter of
sets in the deformed configuration along sequences of suitably
converging sets and deformations.  This is based on  the
characterization result from Theorem \ref{reverse}.
\begin{proposition}[Lower semicontinuity of the perimeter]\label{lem:2}
	Let $(y_k,\z_k)\in\bbQ,\,y\in\bbY,\,\z\in
        L^\infty(\O^{y},\{0;1\})$ with $y,y_k$ satisfying the
        assumptions of  Lemma \emph{\ref{lem:1}}.   Let $F=\{
        \xi\in\O^y\ | \ \z(\xi)=1\}$, $F=\{ \xi\in\O^{y_k}\ | \ \z_k(\xi)=1\}$ and assume $ |F_k\Delta F|\to 0$. If $I:=\liminf_{k\to+\infty}\per(F_k,\O^{y_k})<\infty$, then
	$$
	\per(F,\O^y)\leq I\quad {\rm and}\quad  (y,\z)\in\bbQ ;
	$$ 
\end{proposition}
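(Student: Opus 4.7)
The idea is to recast the problem on the reference configuration through the characterization of Theorem \ref{reverse}, and then pass to the limit in the defining identity for the \v Silhav\'y measure. Set $E_k:=y_k^{-1}(F_k)$ and $E:=y^{-1}(F)$ in $\O$. Since
\[\|\chi_{F_k}-\chi_F\|_{L^1(O_k)}=|(F_k\Delta F)\cap O_k|\le|F_k\Delta F|\to 0,\]
Lemma \ref{lem:1} applied with $\z_k=\chi_{F_k}$ and $\z=\chi_F$ yields $\chi_{E_k}\to\chi_E$ in $L^1(\O)$, and hence in every $L^s(\O)$ with $s<\infty$ (the difference is itself characteristic, so $|\chi_{E_k}-\chi_E|^s=|\chi_{E_k}-\chi_E|$).

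By Theorem \ref{reverse}, the \v Silhav\'y measures $\bp_k:=\bp_{y_k,E_k}\in\calM(\O;\R^3)$ exist and satisfy $|\bp_k|(\O)=\per(F_k,\O^{y_k})$. Extracting a subsequence realizing the liminf $I$, the total variations $|\bp_k|(\O)$ are uniformly bounded. Viewing the $\bp_k$ as Radon measures on the compact set $\overline\O$ (extension by zero), the Banach--Alaoglu theorem provides a further subsequence and a limit $\bp\in\calM(\overline\O;\R^3)$ with $\bp_k\weakto^*\bp$.

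Next I would pass to the limit in the identity
\[\int_\O\chi_{E_k}\,\cof\nabla y_k:\nabla\psi\,\d x=\int_\O\psi\cdot\d\bp_k,\qquad\psi\in C^\infty_{\rm c}(\O;\R^3).\]
The right-hand side converges to $\int_\O\psi\cdot\d\bp$ by weak-$*$ convergence. For the left-hand side, the weak continuity of $2\times 2$ minors under weak $W^{1,p}$-convergence (which applies because $p>3$) yields $\cof\nabla y_k\weakto\cof\nabla y$ in $L^{p/2}(\O;\R^{3\times 3})$; combined with the strong convergence $\chi_{E_k}\nabla\psi\to\chi_E\nabla\psi$ in $L^{(p/2)'}(\O;\R^{3\times 3})$, the weak/strong pairing delivers $\int_E\cof\nabla y:\nabla\psi\,\d x$ in the limit. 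Consequently, the restriction $\bp_{|\O}$ is a \v Silhav\'y measure for $(y,E)$ in the sense of \eqref{eq:p_yE-1}.

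Applying Theorem \ref{reverse} in the reverse direction then yields $F=E^y$ of finite perimeter in $\O^y$ with $\per(F,\O^y)=|\bp_{|\O}|(\O)$. The standard lower semicontinuity of total variation on open sets under weak-$*$ convergence of measures finally gives
\[\per(F,\O^y)=|\bp_{|\O}|(\O)\le|\bp|(\O)\le\liminf_k|\bp_k|(\O)=I,\]
and since $\z=\chi_F\in BV(\O^y;\{0,1\})$ while $y\in\bbY$ by assumption, also $(y,\z)\in\bbQ$. The main delicate point is the product-limit passage in the defining identity: weak convergence of cofactors alone is not enough, and one must couple it with the strong $L^1$-convergence $\chi_{E_k}\to\chi_E$ supplied by Lemma \ref{lem:1}, which in turn rests on the equiintegrability of $\det\nabla y_k^{-1}$ guaranteed by the uniform $L^q$-control on the distortion (Lemma \ref{lem:equi}).
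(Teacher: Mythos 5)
Your argument is correct and follows essentially the same route as the paper's proof: both pass to the reference configuration via Theorem \ref{reverse}, use Lemma \ref{lem:1} to get $\chi_{E_k}\to\chi_E$ in $L^1(\O)$, pair the weak $L^{p/2}$-convergence of $\cof\nabla y_k$ with the strong convergence of $\chi_{E_k}\nabla\psi$ to pass to the limit in \eqref{eq:p_yE-1}, and conclude by lower semicontinuity of the total variation together with a second application of Theorem \ref{reverse}. The only (harmless) presentational difference is that you make the weak-$*$ compactness of the measures explicit via Banach--Alaoglu on $\overline\O$, whereas the paper defines the limit functional $\bp_{y,E}(\psi)$ directly from the convergence of the left-hand sides and bounds its total variation by $I$.
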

\begin{proof}
	Letting $ E=y^{-1}(F)$, and $E_k=y_k^{-1}(F_k)$, we have by
        Theorem \ref{reverse}  that 
	$$
	\per(F_k,\O^{y_k})=|\bp_{y_k,E_k}|.
	$$	
	By  applying  Lemma \ref{lem:1}  to  $\z=\chi_F,
        \z_k=\chi_{F_k}$  we deduce that  $\chi_{E_k}\to
        \chi_E$ in  $L^1(\O)$.  Moreover, since $\nabla y_k\to
        \nabla y$ weakly in  $L^p(\O)$, the convergence 
 	$
	\cof\nabla y_k\to\cof \nabla y
	$  holds  weakly in  ${L^{p/2}}(\O)$. 
	Therefore, for any test function $\psi\in C^\infty_{\rm c}(\O;\R^3)
        $, as $k\to\infty$ we have 
	$$
	\int_{\O}\psi\cdot \d \bp_{y_k,E_k}=\int_{\O}\chi_{E_k}\cof\nabla y_k:\nabla\psi\,\d x\to \int_{\O}\chi_{E}\cof\nabla y:\nabla\psi\,\d x=: \bp_{y,E}(\psi),
	$$
	where the last equality is a definition of the distribution on
        the right side. By the lower semicontinuity of the total
        variation,  we have that $|\bp_{y,E}|\leq I$. We conclude
        by 
        Theorem~\ref{reverse} as $\per(F,\O^y) = |\bp_{y,E}|$. 
\end{proof}

 After this preparatory discussion, we eventually move to the
existence proof for minimizers.   
First we show that the diffuse-interface functional  $\mathcal
F_\eps$  admits a minimizer for every $\varepsilon>0$.  Such
existence  result
is part of the statement of Theorem \ref{th2}.  Indeed, we  restate it here in a
slightly more general form, in which the Dirichlet boundary condition
is imposed only on a subset of the boundary of positive
$\HH^2$-measure, as  it is customary  in elasticity theory.

\begin{proposition}[Existence for the diffuse-interface model]\label{Prop:diffuse-ex}
 Under assumptions \eqref{W-ass1}-\eqref{W-ass4}, let
$\Gamma_0\subset\partial\O$ be  relatively open in $\partial \O$
with 
$\HH^2(\Gamma_0)>0$. Moreover, let $\epsi>0$ and  $(y_0,\zeta_0)\in
\mathbb{Y}\times W^{1,2}(\O^y;[0,1])$ be
such that the  set  
        ${\widetilde{\mathbb{Q}}}_{(y_0,\Gamma_0)}:= \{(y,\z) \in
        \mathbb{Y} \times   W^{1,2}(\O^y;[0,1])\ | \
          y=y_0\text{ on $\Gamma_0$}\}$ is nonempty and
          $\FF_\eps(y_0,\zeta_0)<\infty$.  Then,  there is a minimizer of $\FF_\eps$ on ${\widetilde{\mathbb{Q}}}_{(y_0,\Gamma_0)}$.
\end{proposition}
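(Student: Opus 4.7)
The plan is to apply the direct method to a minimizing sequence $(y_k,\zeta_k)\subset \widetilde{\mathbb{Q}}_{(y_0,\Gamma_0)}$. From $\FF_\varepsilon(y_k,\zeta_k)\le C$, the coercivity \eqref{W-ass2}, the partial Dirichlet condition on $\Gamma_0$, and Poincar\'e's inequality (applied to $y_k-y_0$, which vanishes on the subset $\Gamma_0$ of positive $\HH^2$-measure), I would first extract the uniform bounds
\begin{equation*}
\|y_k\|_{W^{1,p}(\O)}+\|K_{y_k}\|_{L^q(\O)}+\int_{\O^{y_k}}\!\Bigl(\tfrac{\varepsilon}{2}|\nabla\zeta_k|^2+\tfrac{1}{\varepsilon}\Phi(\zeta_k)\Bigr)\d\xi \;\le\; C,\qquad \zeta_k\in[0,1].
\end{equation*}
Up to a non-relabelled subsequence, $y_k\weakto y$ in $W^{1,p}$, hence uniformly on $\overline\O$ since $p>3$; by Ball's classical arguments based on the polyconvex form \eqref{W-ass1}, also $\cof\nabla y_k\weakto \cof\nabla y$ in $L^{p/2}$ and $\det\nabla y_k\weakto \det\nabla y$ in $L^{p/3}$. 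The integrand $F\mapsto |F|^{3q}/(\det F)^q$ appearing in \eqref{W-ass2} is polyconvex on $\GLp$ and blows up as $\det F\to 0^+$, so lower semicontinuity forces $\det\nabla y>0$ almost everywhere and $K_y\in L^q(\O)$.

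In order to verify $y\in\mathbb{Y}$, I would pass the Ciarlet-Ne\v{c}as condition to the limit: by Lemma~\ref{lemma:cn} each $y_k$ is almost-everywhere injective, so the area formula with equality yields $\int_\O\det\nabla y_k\,\d x=|\O^{y_k}|$; Lemma~\ref{lem:0}(i) gives $|\O^{y_k}|\to|\O^y|$, while weak convergence of the determinants provides $\int_\O\det\nabla y_k\,\d x \to \int_\O\det\nabla y\,\d x$. Uniform convergence on $\overline\O$ yields $y=y_0$ on $\Gamma_0$. Hence $y\in\mathbb{Y}$ and, by Theorem~\ref{Thm:MIE}, $y$ is a homeomorphism.

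For the phase field I would use the localization from Lemma~\ref{lem:0}(i): for every open set $A\subset\subset\O^y$ one has $A\subset\O^{y_k}$ for $k$ sufficiently large, and so $\zeta_k|_A$ is uniformly bounded in $W^{1,2}(A;[0,1])$. A diagonal extraction along an exhausting sequence $A_\ell\nearrow\O^y$ produces $\zeta_k\weakto \zeta$ in $W^{1,2}_{\rm loc}(\O^y)$ and, by the Rellich-Kondrachov theorem, $\zeta_k\to\zeta$ in $L^2_{\rm loc}(\O^y)$, with $\zeta\in W^{1,2}(\O^y;[0,1])$ by Fatou's lemma applied over the $A_\ell$. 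Weak $W^{1,2}$-semicontinuity of the Dirichlet part and strong $L^2$-convergence together with continuity of $\Phi$ on each $A_\ell$, followed by $\ell\to\infty$, deliver $\liminf_k\FF_\varepsilon^{\rm int}(y_k,\zeta_k)\ge \FF_\varepsilon^{\rm int}(y,\zeta)$.

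Finally, the bulk term requires strong $L^1$-convergence of the Lagrangian phase indicator. Combining $\zeta_k\to\zeta$ in $L^2_{\rm loc}(\O^y)$ with $|\O^y\setminus O^k|\to 0$ from Lemma~\ref{lem:0} yields $\|\zeta_k-\zeta\|_{L^1(O^k)}\to 0$, so Lemma~\ref{lem:1} gives $\zeta_k\circ y_k\to \zeta\circ y$ in $L^1(\O)$. The standard strong-weak semicontinuity theorem for polyconvex integrands with Carath\'eodory coefficients in $[0,1]$ converging strongly in $L^1$, combined with the weak convergence of $(\nabla y_k,\cof\nabla y_k,\det\nabla y_k)$, produces $\liminf_k \FF^{\rm bulk}(y_k,\zeta_k)\ge \FF^{\rm bulk}(y,\zeta)$. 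Assembling the pieces, $(y,\zeta)$ minimizes $\FF_\varepsilon$ on $\widetilde{\mathbb{Q}}_{(y_0,\Gamma_0)}$. The main obstacle is precisely the Eulerian nature of the gradient term $\int_{\O^{y_k}}|\nabla\zeta_k|^2\d\xi$ on varying domains; it is overcome by the one-sided inclusion $A\subset\O^{y_k}$ coming from the $C^0$-convergence of $y_k$ (guaranteed by $p>3$) and the diagonal exhaustion argument.
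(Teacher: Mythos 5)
Your argument is correct and follows the same overall strategy as the paper: direct method, coercivity plus the partial Dirichlet condition for weak $W^{1,p}$-compactness, polyconvexity to keep $\det\nabla y>0$ and $K_y\in L^q$, the one-sided inclusion $A\subset\subset\O^y\Rightarrow A\subset\O^{y_k}$ from Lemma~\ref{lem:0}(i) to localize the Eulerian term, Lemma~\ref{lem:1} to transport the phase back to $\O$, and the strong--weak semicontinuity theorem for the bulk. The one place where you genuinely diverge is the lower semicontinuity of $\FF^{\rm int}_\eps$ on the varying domains: you work on a smooth exhaustion $A_\ell\nearrow\O^y$, extract diagonally, use weak $W^{1,2}$-semicontinuity and Rellich on each fixed $A_\ell$, and let $\ell\to\infty$ by monotone convergence; the paper instead extends $\z_k$ and $\nabla\z_k$ by zero to $\R^3$, identifies the weak limits $(\eta,H)$, proves that they vanish outside $\overline{\O^y}$ and that $H=\nabla\z$ inside, and applies Fatou on all of $\R^3$. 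Your route is somewhat more elementary (it avoids having to show that the extended limits vanish on the complement and the pointwise a.e.\ convergence needed for Fatou), at the price of only controlling $\z$ inside $\O^y$, which is all that is needed here. Two small points to tighten: (a) you invoke Lemma~\ref{lem:0}(i) to pass the Ciarlet--Ne\v{c}as condition to the limit before $y\in\bbY$ is established, which is formally circular since that lemma assumes $y\in\bbY$; the inclusion $\O^{y_k}\subset\overline{\O^y}+B(0,\delta)$ and $|\partial(\O^y)|=0$ that you actually need follow from uniform convergence and Lusin's condition $N$ alone, so the gap is only in the citation, not in the mathematics; (b) Rellich on $A_\ell$ requires choosing the exhausting sets with enough regularity (e.g.\ finite unions of balls), which is harmless but should be said.
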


\begin{proof}
	Let $(y_k,\z_k)\in {\widetilde{\mathbb{Q}}}_{(y_0,\Gamma_0)}$
        be a minimizing sequence  for  $\FF_\eps$. 
	 The coercivity \eqref{W-ass2}  and the 
        generalized Friedrichs inequality imply that  one can extract
        a  not relabeled subsequence  such
that $y_k\to y$ weakly  in $W^{1,p}(\O;\R^3)$.  The boundary
condition and the 
Ciarlet-Ne\v{c}as condition \eqref{inj} are readily preserved in the
limit. Moreover, one has that the distorsion $K_y\in L^q(\Omega)$ as
the function $F \to |F|^3/\det F$ is polyconvex and $F_k = \nabla y_k$
are weakly converging. We conclude that $y\in \mathbb{Y}$ and $y=y_0$
on $\Gamma_0$. 

	For every $\delta>0$, let $O_\delta:=\{\xi\in\O^y|\, {\rm
          dist}(\xi,\partial\O^y)>\delta\}\subset\subset\O^y$. By Lemma
        \ref{lem:0}  we have that  $\O^y=\cup_\delta O_\delta$
        and $ O_\delta\subset \O^{y_k}$
         for  $k$  large. 
	Denote by $\eta_k $ and $H_k$ the  trivial  extensions on $\Rz^3$  of  $\z_k$ and $\nabla \z_k$ respectively.
	 The coercivity  of $\FF^{\rm int}_\eps$ implies that
         one can extract  not relabeled subsequences such that 
	$\eta_k\to \eta$ weakly* in $L^\infty(\R^3)$   and $H_k\to H$
        weakly in $L^2(\R^{3})$.  Set now  $\z:=\eta|_{\O^y}$. 
	For every $\xi_0\in O_\delta$ and $B(\xi_0,r)\subset O_\delta$ we
        have  that 
	$\eta_k\to \eta$ weakly in $W^{1,2}(B(\xi_0,r))$. This implies
        that $H=\nabla\eta=\nabla\z $ almost everywhere in
        $B(\xi_0,r)$. Moreover,  by possibly extracting again one
        has that   $\eta_k\to\eta $ strongly in
        $L^2(B(\xi_0,r))$. 
	 As every $\xi\in\O^y$ belongs to some $O_\delta$ for $\delta$
         small enough, we get that   $H=\nabla \z$ almost
         everywhere~in $\O^y$. It is also easy to see that $\eta=0,$
         $H=0$  almost everywhere~on the complement of $\O^y$ due to
         the uniform convergence of $y_k$.  Indeed, if 
         $\xi_0\not\in \overline{\O^y}$ then there are two open disjoint
         neighborhoods of $\xi_0$ and $\overline{\O^y}$. Let
         $O\supset\overline{\O^y}$ be the open neighborhood of
         $\overline{\O^y}$. Then for $k$ large enough $\O^{y_k}\subset
         O$ (Lemma \ref{lem:0}), i.e. $\eta_k=0$, $ H_k=0$ in a
         neighborhood of $\xi_0$. Consequently, $ \eta(\xi_0)=0$,
         $H(\xi_0)=0$ at least if $\xi_0$ is a Lebesgue point of $\eta$
         and $H$.  

 The latter argument shows that $\eta_k\to\eta $  pointwise almost
everywhere in the complement of $\overline{\O^y}$. Up to possibly extracting again, we hence have that 
$\eta_k\to\eta $ pointwise almost everywhere in $\Rz^3$ as well. In
fact, the pointwise convergence in $\overline{\O^y}$ follows since
$\eta_k\to\eta $ strongly in $L^2(B(\xi_0,r))$ for any
$B(\xi_0,r)\subset\subset\O^y$ and  $|\eta-\eta_k|\leq 1$ almost
everywhere. 

	 Using the Fatou Lemma, we  find 
	\begin{align}
	\liminf_{k\to\infty}\mathcal{F}^{\,\rm int}_\eps(y_k,\z_k)&=\liminf_{k\to\infty}\int_{\R^3} \Big( \frac\eps2|H_k|^2+\frac1\eps \Phi(\eta_k)\Big)\,\d\xi\ge \int_{\R^3} \Big( \frac\eps2|H|^2+\frac1\eps \Phi(\eta)\Big)\,\d\xi\nonumber\\
	&=\int_{\O^y} \Big( \frac\eps2|\nabla \z|^2+\frac1\eps \Phi(\z)\Big)\,\d\xi=\mathcal{F}^{\,\rm int}_\eps(y,\z)\label{prove}
	\end{align}
	 which  shows the weak lower semicontinuity of the
        interfacial energy.  

	To show the weak lower semicontinuity of the bulk
        contribution, we write  it  as 
	\begin{equation*}
	\widetilde{\mathcal{F}}^{\,\rm bulk}(y,z)=\int_\Omega\Big( z(x)
        W_1(\nabla y(x))+(1-z(x))W_{ 0}(\nabla y(x))\Big)\,\d x  ,
	\end{equation*}
	Notice that the integrand is continuous in $z$ and convex in
        $\nabla y$ and  in its minors.  Let now
        $z_k:=\z_k\circ y_k$ and recall from 
        Lemma~\ref{lem:1}  entails that  $z_k\to z=\z\circ y$ in $L^1(\O)$.
By applying  \cite[Cor.~7.9]{fonseca-leoni}
	we get that $\liminf_{k\to\infty}
        \widetilde{\mathcal{F}}^{\,\rm bulk}(y_k,z_k)\ge
        \widetilde{\mathcal{F}}^{\,\rm bulk}(y,z)$. 
	Consequently, 
	\begin{equation}\label{semifel}
	\liminf_{k\to\infty}  \mathcal{F}^{\,\rm bulk}(y_k,\z_k)
        =\liminf_{k\to\infty}  \widetilde{\mathcal{F}}^{\,\rm
          el}(y_k,z_k) \ge \widetilde{\mathcal{F}}^{\,\rm bulk}(y,z) =
        \mathcal{F}^{\,\rm bulk}(y,\z).
	\end{equation}
 Together with \eqref{prove}, the latter proves  that $(y,\z)$ is
a minimizer of $\FF_\eps$ on $ {\widetilde{\mathbb{Q}}}_{(y_0,\Gamma_0)}$ by means of the direct method \cite{dacorogna}.    
\end{proof}
 We conclude this Section by providing a proof of Theorem
\ref{th1}. 

\begin{proof}[Proof of Theorem \emph{\ref{th1}}]
%
 Let $(y_k,\z_k)\in\bbQ_{y_0}$ be a minimizing sequence  for
 $\FF_0$.  As in  the proof of Proposition
 \eqref{Prop:diffuse-ex}, we can assume, up to extraction of a  not
 relabeled  subsequence, that $y_k\to  y$ weakly in
 $W^{1,p} $  for some $y\in\mathbb Y$.

Letting $F_k=\{ \z_k=1\} $, we can identify the sequence of states
with $(y_k,F_k)$. Since  the interface energy is bounded along the
sequence $(y_k,F_k)$, the sets  $F_k$ have uniformly  bounded
perimeters,  namely,   $\per(F_k,\O^{y_k})\le c $. 
For   $\ell\in \Nz$, let $O^{\ell}:=\{x\in\O^y|\, {\rm
  dist}(x,\partial\O^y)>2^{-\ell}\}\subset\subset\O^y$.  As
$O^{\ell}\subset\O^{y_k}$ for $k$ large enough due to Lemma
\ref{lem:0}, for any given $\ell\in\Nz$  we have that $\limsup_k
\per(F_k,O^{\ell})\le c$.  We can hence   find a measurable
set $G^{\ell}\subset  O^{\ell}$ and a  not relabeled  subsequence $F_{h}$ such that
\[|(F_{h}\Delta G^{\ell})\cap O^{\ell}|\to 0\quad \textrm{for}\quad h\to\infty.
\]
 For all  $\ell'>\ell$  we can  further  extract 
a  subsequence $F_{h'}$ from  $F_{h}$ above  in
such a way  that $|(F_{h'}\Delta G^{\ell'})\cap O^{\ell'}|\to 0 $
and  $G^{\ell'}\cap O^{\ell}=G^{\ell} $.  From  the nested
family of subsequences corresponding to $\ell=1,2,\ldots$  we
extract by a diagonal argument a further subsequence $F_{k'}$. By
setting     $F:=\cup_\ell G^{\ell}$  and,  owing to $
O^{\ell}\nearrow\O^y$,  we get that  
\[
|(F_{k'}\Delta F)\cap \O^{y}|\to 0.
\]
Now, the set $F$ has finite perimeter in $\O^y$ as a consequence of
 Proposition  \ref{lem:2}.  By letting  $\z=\chi_F|_{\O^y} $  we
then have that  $(y,\z)\in \mathbb Q_{y_0} $. 
%
%

 One is left to check that $\FF_0 (y,\z) \leq \liminf
\FF_0(y_k,\z_k)$, which follows from the lower semicontinuity of
$\FF_0$. Indeed, the lower semicontinuity of bulk part of $\FF_0$
follows by the argument of  Proposition
\ref{Prop:diffuse-ex}.  As concerns the interface term, one just
needs to recall Proposition  \ref{lem:2}.
\end{proof}
%

%
\section{ Convergence of phase-field approximations: Proof of
  Theorem \ref{th2}}\label{lastsection}
%
 This section is devoted to the proof of the convergence Theorem
\ref{th2}. The argument relies on $\Gamma$-convergence
\cite{Bra02,DalMaso93}. In particular, we prove a $\Gamma$-$\liminf$
inequality for the interfacial part in Proposition \ref{theo:liminf}
and construct a recovery sequence in Proposition
\ref{theo:limsup}. Let us start by the former. 

\begin{proposition}[$\Gamma$-$\liminf$ inequality]\label{theo:liminf}
	Let $(y_k,\zeta_k), (y, {\zeta})\in\overline\bbQ$ be such that
	\begin{itemize}
		\item [i)] $ \liminf_{k\to+\infty}  \mathcal{F}^{\,\rm
                    int}_{\eps_k}(y_k,\z_k)<\infty$ for  some  sequence $\eps_k\to 0$,
		\item[ii)] $y_k\to y$ weakly in $W^{1,p}(\O;\Rz^3)$, $p>3$,
		\item [iii)]  $\lim_{k\to+\infty}\|\zeta_k- {\zeta}\|_{L^1(O^k)}=0 $, with $ O^k:=\O^{y_k}\cap \O^y$.
	\end{itemize}
	Then, there exists $E^y\subset \O^y$ measurable such that
	$$
	{\z }=\chi_{E^y} \ \  \text{and}  \ \ \gamma\per(E^y,\O^y)\leq \liminf_{k\to+\infty}  \mathcal{F}^{\,\rm int}_{\eps_k}(y_k,\z_k).
	$$ 
	In particular,  one has that  $(y, {\z})\in\bbQ $.
\end{proposition}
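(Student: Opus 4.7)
The argument follows the classical Modica–Mortola template \cite{MM}, with the additional technical twist that the ambient domain $\Omega^{y_k}$ varies with $k$. Denote by $c$ the finite liminf on the right-hand side and extract a (not relabeled) subsequence along which the liminf is attained as a limit. By Lemma~\ref{lem:0}(i) and hypothesis (iii), for any open $A\subset\subset\Omega^y$ one has $A\subset\Omega^{y_k}$ eventually and $\zeta_k\to\zeta$ in $L^1(A)$; extracting further, convergence holds pointwise almost everywhere on $A$.

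To show that $\zeta$ is a characteristic function, observe that
$$\int_{\Omega^{y_k}}\Phi(\zeta_k)\,\d\xi\le \eps_k\,\FF^{\rm int}_{\eps_k}(y_k,\zeta_k)\to 0.$$
Fixing an open $A\subset\subset\Omega^y$, Fatou's lemma gives $\int_A\Phi(\zeta)\,\d\xi\le\liminf_k\int_A\Phi(\zeta_k)\,\d\xi=0$, so $\Phi(\zeta)=0$ a.e.\ on $A$. As $\Phi$ vanishes only at $0$ and $1$, and $A$ is arbitrary, $\zeta\in\{0,1\}$ a.e.\ on $\Omega^y$; set $E^y:=\{\zeta=1\}$.

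Next, introduce the nondecreasing primitive $W(t):=\int_0^t\sqrt{2\Phi(s)}\,\d s$, which satisfies $W(0)=0$ and $W(1)=\gamma$, and define $\phi_k:=W\circ\zeta_k\in W^{1,2}(\Omega^{y_k})$. By Young's inequality,
$$|\nabla\phi_k|=\sqrt{2\Phi(\zeta_k)}\,|\nabla\zeta_k|\le \frac{\eps_k}{2}|\nabla\zeta_k|^2+\frac{1}{\eps_k}\Phi(\zeta_k),$$
hence $\int_{\Omega^{y_k}}|\nabla\phi_k|\,\d\xi\le\FF^{\rm int}_{\eps_k}(y_k,\zeta_k)$. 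For any open $A\subset\subset\Omega^y$, the restriction $\phi_k|_A$ belongs to $W^{1,1}(A)$ for $k$ large, is uniformly bounded by $\gamma$, and converges pointwise a.e.\ on $A$ to $\gamma\chi_{E^y}$; dominated convergence then gives $\phi_k\to\gamma\chi_{E^y}$ in $L^1(A)$. Lower semicontinuity of the total variation under $L^1$-convergence yields
$$\gamma\per(E^y,A)=|D(\gamma\chi_{E^y})|(A)\le \liminf_k|D\phi_k|(A)\le \liminf_k\FF^{\rm int}_{\eps_k}(y_k,\zeta_k).$$
Exhausting $\Omega^y$ by the open sets $O^\ell:=\{\xi\in\Omega^y\mid\dist(\xi,\partial\Omega^y)>2^{-\ell}\}$ already used in the proof of Theorem~\ref{th1} and letting $\ell\to\infty$ yields $\gamma\per(E^y,\Omega^y)\le\liminf_k\FF^{\rm int}_{\eps_k}(y_k,\zeta_k)$, which in particular shows $(y,\zeta)\in\bbQ$.

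The chief difficulty is that classical Modica–Mortola operates on a fixed ambient domain, whereas the diffuse energy here lives on the varying domain $\Omega^{y_k}$, approaching $\Omega^y$ only in the set-convergence sense of Lemma~\ref{lem:0}. The localization to nested compacts $O^\ell\subset\subset\Omega^y$ neutralizes this obstacle: on each such $O^\ell$ the sequence $\zeta_k$ is eventually defined and the Modica–Mortola chain applies verbatim, and the perimeter of $E^y$ in $\Omega^y$ is then recovered as the supremum over the exhausting family.
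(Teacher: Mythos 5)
Your proof is correct, but it follows a genuinely different route from the paper's. The paper applies the coarea formula to reduce the diffuse energy to $\int_0^1\sqrt{2\Phi(s)}\,\per(\{\z_k>s\},\O^{y_k})\,\d s$, shows $|\{\z_k>s\}\Delta F|\to 0$, and then invokes its Proposition on lower semicontinuity of the perimeter in \emph{varying deformed domains} (Proposition \ref{lem:2}), which in turn rests on the characterization of Theorem \ref{reverse}: the semicontinuity is transferred to the reference configuration via the \v{S}ilhav\'y measures $\bp_{y_k,E_k}$ and the weak convergence of $\chi_{E_k}\cof\nabla y_k$, and the proof concludes with Fatou in the level variable $s$. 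You instead use the classical primitive/truncation device $\phi_k=W\circ\z_k$ with Young's inequality and lower semicontinuity of the total variation under $L^1_{\rm loc}$ convergence, entirely in the Eulerian configuration; the varying-domain issue is neutralized by localizing to fixed open sets $A\subset\subset\O^y$, which by Lemma \ref{lem:0}(i) are eventually contained in every $\O^{y_k}$, and the perimeter in $\O^y$ is recovered by inner regularity along the exhaustion $O^\ell\nearrow\O^y$. Your argument is more elementary and self-contained -- it bypasses Theorem \ref{reverse} and Proposition \ref{lem:2} altogether -- whereas the paper's route reuses machinery it needs anyway for Theorem \ref{th1} and makes explicit the link between the $\Gamma$-liminf and the \v{S}ilhav\'y-measure framework. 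Two cosmetic points: the further subsequence extracted for pointwise a.e.\ convergence may a priori depend on $A$ (harmless, since the resulting bound $\gamma\per(E^y,A)\le c$ is subsequence-independent, but worth a word or a diagonal extraction over the $O^\ell$), and the lower-semicontinuity step should be phrased so that it simultaneously delivers $\chi_{E^y}\in BV(A)$, which is what ultimately gives $(y,\z)\in\bbQ$.
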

%
%
\begin{proof}
 Moving from Proposition  \ref{lem:2}, the proof proceeds
along the lines of the classical Modica-Mortola $\Gamma$-convergence
result \cite{MM}.  As  $ \liminf_{k\to+\infty}  \mathcal{F}^{\,\rm
  int}_{\eps_k}(y,\z)<\infty$  and  $\Phi(s)=0$ only for
$s=0,1$,  we have that  
	$
	{\z}=\chi_{F},
	$
	for some measurable set $F\subset\O^y $.
	By  using  the coarea formula  we deduce that 
	\begin{align*}
	\mathcal{F}^{\,\rm int}_{\eps_k}(y_k,\z_k)&=\int_{\Omega^{y_k}}\Big( \frac{\eps_k}2|\nabla 
	\z_k|^2+\frac 1{\eps_k} \Phi(\z_k)\Big)\,\d\xi\\
&\geq \int_{\Omega^{y_k}}\sqrt{2\Phi(\z_k)}\,|\nabla
	\z_k|\,\d\xi=\int_{0}^{1}\sqrt{2\Phi(s)}\,\per(\{\z_k>s\},\O^{y_k})\,\d s
	\end{align*}
	 Given any  $\delta\in (0,1)$ and
        $s\in[\delta,1-\delta]$  one has that 
	$$
	|\{\xi\in\O^{y_k}\ | \ \z_k>s\}\Delta F|\leq\frac{1}{\delta} \|\zeta_k-{\zeta}\|_{L^1(O^k)}+|\O^{y_k}\Delta \O^y|
	$$
	Therefore, by  applying Lemma \ref{lem:0} we get 
	$$
|\{\xi\in\O^{y_k} \ | \ \z_k>s\}\Delta F|\to 0\quad	\forall s\in[\delta,1-\delta].
	$$
	 Owing to Proposition \ref{lem:2} we obtain 
	$$
	\per(F,\O^y)\leq\liminf_{k\to+\infty}\per(
        \{\z_k>s\},\O^{y_k}) \quad 	\forall s\in[\delta,1-\delta].
	$$
	Hence,  as  $\delta\in (0,1)$,  by  applying
        the Fatou Lemma  one gets 
	\begin{align*}
	\liminf_{k\to+\infty} \mathcal{F}^{\,\rm int}_{\eps_k}(y_k,\z_k)&\geq \int_{0}^{1}\sqrt{2\Phi(s)}\,\liminf_{k\to+\infty} \per(\{\z_k>s\},\O^{y_k})\,\d s\\
	&\geq \int_{\delta}^{1-\delta}\sqrt{2\Phi(s)}\,\liminf_{k\to+\infty} \per(\{\z_k>s\},\O^{y_k})\,\d s\\
	&\geq \int_{\delta}^{1-\delta}\sqrt{2\Phi(s)}\,\cdot  \per(F,\O^{y})\,\d s
	\end{align*}
	and  the assertion follows as  $ \int_{\delta}^{1-\delta}\sqrt{2\Phi(s)}\,\d s\to \gamma $ for $\delta\to 0$.
\end{proof}
%
%
The existence of a recovery sequence is a direct consequence of the
classical Modica-Mortola theorem  \cite{MM}  as soon as  we assume that $\O^y$
is a Lipschitz domain.  Although this Lipschitz continuity could
fail to hold for general deformations, we can enforce it by asking
$y_0(\O)$ to be a Lipschitz domain where $y_0$ is the imposed boundary
deformation, see, e.g.,~\cite{Ball-1981} for a similar argument. Note
that the Lipschitz assumption on $\O^y$   was not needed  for
the $\Gamma$-$\liminf$ inequality of Proposition \ref{theo:liminf}. 
\begin{proposition}[Recovery sequence]\label{theo:limsup}
	
	If $(y,{\z})\in\overline\bbQ_{y_0}$, $y_0(\O)\subset\R^3$ 
        being  a Lipschitz domain, and  $F=\{\z=1\} $, there exists a sequence $\z_k\subset W^{1,2}(\O^y;[0,1])$ such that
	$$
	\lim_{k\to\infty}\|\z_k- {\z}\|_{L^1(\O^y)}=0 \quad\mbox{and}\quad \gamma\per( F, \O^y)+\mathcal{F}^{\,\rm bulk}(y,\z)=\lim_{k\to\infty}\mathcal{F}_{\eps_k}(y,\z_k).	$$
\end{proposition}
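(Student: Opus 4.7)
The strategy is to invoke the classical Modica–Mortola recovery construction on the Lipschitz domain $\Omega^y$, and then verify that the bulk term (which depends on $\zeta_k$ only through multiplication by fixed densities $W_i(\nabla y)$) passes to the limit. I would begin by reducing to the nontrivial case $\gamma\per(F,\Omega^y)<\infty$ and $\mathcal{F}^{\rm bulk}(y,\zeta)<\infty$: if either quantity is infinite, any sufficiently regular $L^1$-approximant $\zeta_k$ of $\chi_F$ makes the right-hand side diverge (by Proposition~\ref{theo:liminf} for the interface and by Fatou for the bulk). I would also point out that the assumption on $\Omega^{y_0}$ transfers to $\Omega^y$: since $y=y_0$ on $\partial\Omega$ and both $y,y_0$ are homeomorphisms by Theorem~\ref{Thm:MIE}, one has $y(\partial\Omega)=y_0(\partial\Omega)$, and Jordan–Brouwer separation forces $\Omega^y=\Omega^{y_0}$, which is therefore Lipschitz.

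Next, for the interface term I would approximate $F$ by smooth open sets $F_n\subset\Omega^y$ (say via the coarea formula applied to mollifications of $\chi_F$) with $|F_n\Delta F|\to 0$ and $\per(F_n,\Omega^y)\to\per(F,\Omega^y)$, and apply the classical Modica–Mortola optimal-profile construction: set
\[
\zeta_n^k(\xi):=q\!\left(d_{F_n}(\xi)/\epsilon_k\right),
\]
where $d_{F_n}$ is the signed distance from $\partial F_n$ (positive inside $F_n$) and $q:\R\to[0,1]$ is (a truncation of) the solution of $q'=\sqrt{2\Phi(q)}$, normalized so that $\int_0^1\sqrt{2\Phi(s)}\,\d s=\gamma$. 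A direct change of variable in the normal direction, using that $\partial F_n$ is $C^\infty$ and hence that $d_{F_n}$ is $C^\infty$ in a tubular neighbourhood, yields $\mathcal{F}^{\,\rm int}_{\epsilon_k}(y,\zeta_n^k)\to \gamma\per(F_n,\Omega^y)$ and $\zeta_n^k\to\chi_{F_n}$ in $L^1(\Omega^y)$ as $k\to\infty$. A diagonal extraction then delivers $\zeta_k\in W^{1,2}(\Omega^y;[0,1])$ with $\zeta_k\to\chi_F$ in $L^1(\Omega^y)$ and $\mathcal{F}^{\,\rm int}_{\epsilon_k}(y,\zeta_k)\to\gamma\per(F,\Omega^y)$; by extracting once more one may assume pointwise convergence $\zeta_k(\xi)\to\chi_F(\xi)$ a.e. in $\Omega^y$.

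For the bulk term one has to show that $\mathcal{F}^{\rm bulk}(y,\zeta_k)\to\mathcal{F}^{\rm bulk}(y,\zeta)$. Since $y$ satisfies Lusin's $N^{-1}$ condition (Section~\ref{injectivity}), pointwise a.e.\ convergence $\zeta_k\to\chi_F$ on $\Omega^y$ transfers to $\zeta_k\circ y\to \chi_F\circ y$ a.e.\ on $\Omega$, and the bulk integrand
\[
g_k(x):=(\zeta_k\circ y)(x)W_1(\nabla y(x))+(1-\zeta_k\circ y)(x)W_0(\nabla y(x))
\]
converges pointwise a.e.\ to its $\zeta$-analogue $g(x)$. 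Outside the tube $T_k:=\{0<\zeta_k<1\}\subset\Omega^y$ one may arrange, by the structure of the optimal profile, that $\zeta_k=\chi_{F_n}$, so that the discrepancy $|g_k-g|$ is supported on $T_k^*\cup y^{-1}(F\Delta F_n)$; in both cases the measures vanish as $k\to\infty$ (for $T_k^*$ one uses Lemma~\ref{lem:equi} applied to the constant sequence $y$ to obtain $|T_k^*|\to 0$ from $|T_k|\to 0$). I expect the main technical obstacle to be precisely this last step: since no upper bound on $W_0,W_1$ is assumed, the bound $|g_k-g|\le (W_0+W_1)(\nabla y)$ need not be globally $L^1$, and one has to exploit the splitting $W_1(\nabla y)\in L^1(F^*)$, $W_0(\nabla y)\in L^1(\Omega\setminus F^*)$ together with absolute continuity of the integral to control the contribution on $T_k^*$; this is the point at which the symmetric tube-shape of the Modica–Mortola profile and the freedom in choosing the truncation constant for $q$ are used. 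Once this is settled, adding the interfacial and bulk convergences yields the desired identity for the full energy $\mathcal{F}_{\epsilon_k}(y,\zeta_k)$.
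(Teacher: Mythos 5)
Your proposal follows essentially the same route as the paper, whose proof is literally two sentences: apply the classical Modica--Mortola construction on the fixed Lipschitz domain $\O^y$, then use that $\FF^{\,\rm bulk}(y,\cdot)$ is ``strongly continuous in $\z$''. Your implementation of the interface part (smooth approximation $F_n$ of $F$, optimal profile of the signed distance, diagonal extraction) is the standard way to carry this out, and your observation that $\O^y=\O^{y_0}$ is the intended reading of the Lipschitz hypothesis on $y_0(\O)$.

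The one substantive point is the bulk term, and you are right to single it out as the delicate step; however, your proposed resolution does not quite close it. Since no upper growth bound on $W_0,W_1$ is assumed, finiteness of $\FF^{\,\rm bulk}(y,\chi_F)$ only yields $W_1(\nabla y)\in L^1(y^{-1}(F))$ and $W_0(\nabla y)\in L^1(\O\setminus y^{-1}(F))$. On the part of the transition tube lying inside $y^{-1}(F)$ the discrepancy is $(1-\z_k\circ y)\,(W_0(\nabla y)-W_1(\nabla y))$, whose dominant piece $W_0(\nabla y)$ is precisely \emph{not} known to be integrable there; ``absolute continuity of the integral'' applies only to integrable functions, so it cannot be invoked for this piece, and neither the symmetry of the profile nor the truncation constant of $q$ helps (indeed $\FF^{\,\rm bulk}(y,\z_k)$ could a priori be $+\infty$ for every $k$). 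To make the step rigorous one needs either an extra hypothesis guaranteeing $W_0(\nabla y)+W_1(\nabla y)\in L^1(\O)$ (e.g.\ an upper bound on $W_i$ matching the coercivity \eqref{W-ass2}), or a recovery construction whose transition layer avoids the non-integrable set. I would not count this against you relative to the paper --- the paper's appeal to ``strong continuity'' silently assumes exactly the same domination --- but you should state the missing integrability explicitly rather than gesturing at the profile's shape.
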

\begin{proof}	
	The sequence $\z_k$  is delivered  by the classical
        Modica-Mortola  construction  \cite{MM} applied to the
        functional $\mathcal{F}_\eps^{\,\rm int}(y,\z)$ with $y$
        fixed.  In fact, once the interface part convergence, the
        bulk part also follows because  $\mathcal{F}^{\,\rm bulk}(y,\z)$ is strongly continuous in $\z$.
\end{proof}

 We eventually combine the $\Gamma$-$\liminf$
inequality of Proposition \ref{theo:liminf} and the recovery-sequence
construction of Proposition
\ref{theo:limsup} in order to prove Theorem \ref{th2}. 

\begin{proofth2} Existence of minimizers $(y_k,\z_k)$ for  $ \FF_{\eps_k} $ has
  already been checked in  Proposition \ref{Prop:diffuse-ex}. 
 Let $(y_{0},\z_{0k})$ be the recovery sequence for $(y_0,\z_0)$
whose existence is proved in Proposition \ref{theo:limsup}. 
 By comparing with  $(y_{0},\z_{0k})$  one gets that 
\begin{align*}
  \FF^{\rm
  el} (y_k,\z_k)+\FF^{\rm
  int}_{\e_k}(y_k,\z_k)&=\FF_{\e_k}(y_k,\z_k)\leq \FF_{\e_k}
                         (y_{0},\z_{0k}) <C<\infty
\end{align*}
where we have used  the fact that  $\FF^{\rm
  int}_{\e_k}(y_{0},\z_{0k}) \to  \FF^{\rm
  int}_0(y_0,\z_0)$.  The latter bound  and the coercivity
\eqref{W-ass2} ensures that 
$y_k\to y$ weakly in $W^{1,p}(\O;\R^3)$ and 
 $|\O^y\Delta \O^{y_k}|\to 0$ by Lemma \ref{lem:0},  for some not
  relabeled subsequence.  On the other hand, since
 $\Omega^{y_k}$   contains any open set $A\subset\subset\O^y$ 
 for large $k$,  the  latter  bound on $\mathcal{F}^{\rm
   int}_{\eps_k}(y_k,\zeta_k)$ yields strong $L^1(A)$ compactness for
 the sequence $\zeta_k$. This implies the existence of $\zeta\in
 L^\infty(\Omega^y; [0,1])$ such that $\|\zeta_k-\zeta\|_{L^1(O^k)}\to
 0$  for some  not relabeled subsequence, as in the proof of
 Theorem \ref{th1}.   Proposition \ref{theo:liminf} ensures that
 $\zeta$ is a characteristic function and 
$$\FF^{\rm int}_0(y,\z) \leq \liminf_{k\to \infty} \FF^{\rm
  int}_{\e_k}(y_k,\z_k).$$
Moreover, for all $(\tilde y,\tilde \z) \in
\mathbb{Q}_{y_0}$ Proposition \ref{theo:limsup} ensures that there
exists a recovery sequence $\tilde \z_k$ such that $\FF_{\e_k}(\tilde y,\tilde
\z_k) \to \FF_0(\tilde y, \tilde \z)$.
As the bulk term $\FF^{\rm bulk}$ is lower semicontinuous, we conclude
that 
$$\FF_0(y,\z) \leq \liminf_{k\to \infty} \FF_{\e_k}(y_k,\z_k)\leq
\liminf_{k\to \infty} \FF_{\e_k}(\tilde y,\tilde \z_k) = \FF_0(\tilde y, \tilde \z).$$
Hence, $(y,\z)$ minimizes
$\FF_0$ on $\mathbb{Q}_{y_0}$. 
\end{proofth2}

%
%
\section*{Acknowledgements}  
This research is supported by the
FWF-GA\v{C}R project    I\,2375-16{-}34894L  and by the 
OeAD-M\v{S}MT  project   CZ~17/2016-7AMB16AT015.  M.K.~ was further supported by the GA\v{C}R project 18-03834S.
U.S.\ acknowledges the support by the Vienna Science and Technology Fund (WWTF)
through Project MA14-009 and by the Austrian Science Fund (FWF)
projects F\,65  and  P\,27052.  
The authors are indebted to 
L.~Ambrosio and M.~\v Silhav\'y for  inspiring conversations. 
%
%

%
%
\end{document}